\def\bH{\textbf{I}}
\def\bM{\textbf{R}}
\def\cA{\mathcal{A}}
\def\cS{\mathcal{S}}
\def\mE{\mathbb{E}}
\def\smskip{\smallskip}
\def\texitem#1{\par\smskip\noindent\hangindent 25pt
	\hbox to 25pt {\hss #1 ~}\ignorespaces}
\newcommand{\BEAS}{\begin{eqnarray*}}
	\newcommand{\EEAS}{\end{eqnarray*}}
\newcommand{\BEA}{\begin{eqnarray}}
\newcommand{\EEA}{\end{eqnarray}}
\newcommand{\BEQ}{\begin{eqnarray}}
\newcommand{\EEQ}{\end{eqnarray}}
\newcommand{\BIT}{\begin{itemize} \setlength{\itemsep}{0.5cm}}
	\newcommand{\EIT}{\end{itemize}}
\newcommand{\BNUM}{\begin{enumerate}}
	\newcommand{\ENUM}{\end{enumerate}}
\newcommand{\BA}{\begin{array}}
	\newcommand{\EA}{\end{array}}
\newcommand{\comment}[1]{}
\let\forallinitial\forall%
\renewcommand{\forall}{\hspace{2mm}\forallinitial\hspace{0.7mm}}
\newtheorem{thm}{Theorem}[section]
\newtheorem{lem}{Lemma}[section]
\newtheorem{defn}{Definition}[section]
\newtheorem*{defn*}{Definition}
\newtheorem*{prop*}{Proposition}
\newtheorem{ass}{Assumption}[section]
\newtheorem{alg}{Algorithm}
\DeclareMathOperator*{\argmin}{argmin}
\DeclareMathOperator*{\argmax}{argmax}
\newcommand{\Prop}{ \mathbb{P} }
\newcommand{\Ex}{ \mathbb{E} }
\newcommand{\E} {
	\mathbb{E}}
\title{Robo-advising: Learning Investors' Risk Preferences via Portfolio Choices}
\author{Humoud Alsabah, Agostino Capponi, Octavio  Ruiz Lacedelli, and Matt Stern\thanks{Alsabah: Department of Industrial Engineering and Operations Research, Columbia University, New York, NY 10027; \href{mailto:hwa2106@columbia.edu}{hwa2106@columbia.edu}.
		Capponi: Department of Industrial Engineering and Operations Research, Columbia University, New York, NY 10027; \href{mailto:ac3827@columbia.edu}{ac3827@columbia.edu}.
		Lacedelli: Department of Industrial Engineering and Operations Research, Columbia University, New York, NY 10027; \href{mailto:or2200@columbia.edu}{or2200@columbia.edu}.
		Stern: Department of Industrial Engineering and Operations Research, Columbia University, New York, NY 10027; \href{mailto:mns2141@columbia.edu}{mns2141@columbia.edu}. 
		We are grateful to the participants of the PhD Columbia seminar on Topics in FinTech, the SIAM 2018 Annual Meeting, the SIAM 2019 Conference in Financial Mathematics and Engineering, and the Fields Institute Workshop on Frontier Areas in Financial Analytics. 
	}
}
\date{}
\begin{document}

\maketitle

\begin{abstract}
We introduce a reinforcement learning framework for retail robo-advising. The robo-advisor does not know the investor's risk preference, but learns it over time by observing her portfolio choices in different market environments. {We develop an exploration-exploitation algorithm which trades off costly solicitations of portfolio choices by the investor with autonomous trading decisions based on stale estimates of investor's risk aversion.} We show that the algorithm's value function converges to the optimal value function of an omniscient robo-advisor over a number of periods that is polynomial in the state and action space. By correcting for the investor's mistakes, the robo-advisor may outperform a stand-alone investor, regardless of the investor's opportunity cost for making portfolio decisions. 
\end{abstract}

\noindent \emph{JEL Classification:} D14, G02, G11 \\
\emph{Keywords:} robo-advising, reinforcement learning,  portfolio selection, probably approximately correct-Markov decision processes (PAC-MDP)

\section{Introduction}

Robo-advisors have emerged prominently as an alternative to traditional human advisors. First introduced as independent start-ups, with Betterment and Wealthfront being prominent examples, robo-advisors have then been adopted by larger investment companies including, among others, Vanguard and BlackRock. According to \cite{Regan}, robo-advisors managed a total asset value of \$300 billions by 2016, and they are projected to reach \$2.2 trillions only in the United States by 2020.

Using built-in algorithmic procedures, robo-advisors monitor and re-balance investors' portfolios in a low cost and efficient manner (\cite{Bjernes}). They save on fixed costs, such as salaries of financial advisors and maintenance of physical offices, and by reducing investment requirements to a minimum they can charge lower fees. {They provide transparent and systematic advise, mitigating the bias in the data gathering and investors' recommendations process that is typical of human advising (\cite{Linnain}).}

The performance of the robo-advisor strongly depends on its ability to accurately assess the investor's risk tolerance. Current practices followed by robo-advising firms to evaluate investors' risk profiles are based on online questionnaires. 
Wealthfront, a pioneer in the robo-advice space, gauges its investors' risk preferences by asking how they would react to significant losses implied by a market decline and whether they are more interested in maximizing gains, minimizing losses or both equally. On the basis of these answers, as well as other objective metrics (e.g. years to retirement, annual after-tax income to expense ratio), they construct an investment risk metric.
Similar types of questionnaires are also employed by Schwab Intelligent Portfolios, another major automated wealth management service.\footnote{We refer to~\cite{lam2016robo} for additional details on the risk-assessment procedure used by robo-advisor firms as well as the form of interaction with investors to elicit preferences.
}

The use of questionnaires as a means to accurately elicit risk preferences is appealing, but also presents shortcomings. Asking an investor how she would react to a significant loss is unlikely  to account for emotional responses, as the latter are only manifested when the investor actually incurs losses. \cite{holt2002risk} show that investors tend to exhibit more risk tolerance in hypothetical situations as opposed to real ones. \cite{barsky1997preference} assert that survey responses are subject to noise, and hence do not represent an accurate measure of the investor's risk profile. In an experiment, \cite{yook2003assessing} assess the risk tolerance of the same investor using six standard questionnaires. Surprisingly, they find low correlations between the risk tolerance assessment of these questionnaires, even though they were applied to the same investor.\footnote{{The Monetary Authority of Singapore (MAS) issued a consultation paper, in which they propose minimum standard requirements for the regulation of robo-advisor services} (\cite{MAS}). {Those requirements include being able to resolve inconsistent responses from the investor by asking additional questions or contacting her to obtain clarifications on her responses.}} These findings highlight that alternative procedures for risk preference elicitation may be desirable.

{We develop a framework in which the robo-advisor learns the investor's risk preference from experience, i.e., by observing her portfolio choices under changing market conditions.
In each period, the robo-advisor must place the investor's capital into one of several pre-constructed portfolios, each having a distribution of returns that depends on the prevailing market condition. Each portfolio decision reflects the robo-advisor's belief on that specific investor's risk preferences. The robo-advisor can update its assessment by asking the investor to make the portfolio selection herself. Soliciting portfolio choices, however, presents an opportunity cost to the investor who needs to devote time to perform market research or seek more expert advice.
This leads to an \emph{exploration/exploitation tradeoff}: The robo-advisor must decide between
making investments decisions based on its current estimate of the investor's risk preference or soliciting a costly action from the investor that can improve its risk preference assessment.

We propose a planning algorithm, and show that it converges to the (intractable) optimal investment policy in a ``small'', i.e., polynomial in the quantities describing the system, number of steps. Our convergence analysis is inspired from the PAC-MDP (Probably Approximately Correct in Markov Decision Processes) approach.\footnote{See \cite{strehl2009reinforcement} for a survey on PAC-MDP algorithms and their corresponding complexity bounds.} By taking advantage of structural properties of our robo-advising framework, we show that tighter bounds, relative to those implied by existing PAC-MDP algorithms, on the number of convergence steps can be achieved. Specifically, we exploit the key property that our investor is ``small'' relative to the market she trades in, and thus investment decisions executed by the robo-advisor on her behalf do not affect future market conditions.

We demonstrate that the convergence rate to the optimal policy depends on the consistency of the investor's portfolio choices. The robo-advisor takes longer to learn the risk preference of an investor who commits many mistakes when selecting her portfolio (e.g. unsure of her own risk preference, lacking composure, and subject to variations in emotions) as opposed to an investor who acts more consistently with her risk preferences (e.g.  more self-confident on her risk-tolerance, disciplined, and with low emotional variations). {We calibrate our model to business cycle data from the National Bureau of Economic Research (NBER)
and Vanguard's economic and market outlook reports.}\footnote{Vanguard currently holds the largest robo-advisor in the world, with \$112 billions assets under management.} {Our calibration shows that the robo-advising system achieves a higher value of the investment criterion than that of an investor-only model,} in which the investor directly chooses the portfolio at a cost. The avoidance of these opportunity costs is one of the major advantages of robo-advising, which allows the investor to delegate time-consuming activities to the algorithm and considerably reduce these costs.} {A second important advantage of the robo-advising system is its ability to correct for the investor's mistakes. Our analysis shows that even if the portfolio choices were costless to the investor, she would still be outperformed by a robo-advisor with limited initial knowledge on the investor's risk preference.}

{The paper proceeds as follows. Section~\ref{sec:litreview} relates our paper to existing literature.} Section~\ref{sec:model} formally describes the model. Section~\ref{sec:near_opt_sol} presents the learning algorithm and provides worst-case theoretical guarantees. Section~\ref{sec:nums} {analyzes the learning speed of the investor's risk preferences and the value of the robo-advising system for a calibrated version of our model.} {Section \ref{sec:conclusion} concludes the paper.} Technical proofs and additional supporting material are delegated to the Appendix.

\section{Literature Review} \label{sec:litreview}

Our work contributes to the so far scarce literature on robo-advising. \cite{d2019promises} empirically investigate the implications of a robo-advising platform on performance and trading behavior of an investor. {In their framework, the robo-advisor selects an optimal portfolio for each investor, and {the investor} has the option to accept or override this decision. They find that the adoption of robo-advising increases portfolio diversification and reduces well-known behavioral biases, such as the disposition effect, trend chasing, and the rank effect. 
\cite{Das1} {develop a framework for goals-based wealth management.\footnote{The goals-based investment strategy is also followed by Betterment, one of the leading robo-advisor firms with \$14 billions assets under management. Betterment accounts for the investor's time horizon and attributes like age, retirement time, annual income, and investment goals. However, it does not account for an investor's subjective risk tolerance in the portfolio selection procedure, and employs mean–variance optimization to construct efficient portfolios.} They constrain the set of admissible portfolios to those that lie on the Markowitz's efficient frontier, and allow the investor to specify her goal in terms of desired probabilities of achieving the goal. Using geometric arguments, they show that the simultaneous achievement of all specified goals pin down the desired portfolio on the efficient frontier.}
Unlike these studies, our work views the robo-advisor as an algorithm that provides systematic advice, and calibrates itself to the risk profile of the investor it serves. Specifically, {our approach elicits information about the investor by offering her a discrete catalogue of portfolios, that may be viewed as lying on the efficient frontier.}

{
	 Our work is closely related to the stream of literature on
	 inverse reinforcement learning (IRL) (\cite{russell1998learning}, \cite{ng2000algorithms}, \cite{abbeel2004apprenticeship}), which aims at learning an agent's reward function by observing her behavior. These studies typically assume that demonstration trajectories are exogenously specified, or that inquiring for an agent's action is costless (see \cite{arora2018survey} for a recent survey). As in the IRL literature, the goal of our robo-advisor is to learn the investor's risk preference and hence her reward function. Unlike studies in IRL, in our framework the robo-advisor endogenously determines the amount of information to elicit from the investor so to strike a balance between {the opportunity cost faced by the investor and the suboptimality of decisions based on stale information}. This is highly relevant in retail robo-advising, where soliciting an action from the investor requires close monitoring of financial markets and hence presents an opportunity cost.  
}

{Our work is also related to} studies {that} utilize reinforcement learning to maximize an investor's risk-adjusted returns. In a multi-arm bandit setting, \cite{sani2012risk} study the problem of finding the arm that provides the best risk-return trade-off by using variance as a measure of risk.
They propose two algorithms to solve this mean–variance bandit problem and provide theoretical bounds on regret. In a similar setting, \cite{vakili2016risk} provide algorithms that guarantee tighter bounds on regret than the ones in \cite{sani2012risk}.
In these studies, the risk-aversion parameter of the investor is assumed to be known while the market model is not. Specifically, the mean and variance of returns for each portfolio is unknown, and the objective is to maximize rewards, given a pre-specified investor's risk preference. By contrast, in our framework the robo-advisor is unsure about the investor's risk preference, but has no uncertainty about the market model.

{We conclude by mentioning that, in addition to helping investors to design and execute investment portfolios in accordance with their risk appetite, robo-advisors also offer other types of wealth management services. One of the most important services provided is automated tax management. The early work of} \cite{Constantinides} {finds that, in the presence of taxes, the optimal stock trading strategy is to realize capital losses immediately and defer capital gains for as long as possible. Such a strategy lies at the heart of loss-harvesting, as described by} \cite{Stein}. The laborious work of tax-loss harvesting requires continuous monitoring, which is ideally suited for robo-advisors that significantly cut down the time and labor costs for these activities. Hence, robo-advisors increase the number of opportunities to successfully harvest a tax-loss, relative to traditional investment managers (\cite{Traff}).


\section{Model}\label{sec:model}
{We consider an investment horizon consisting of $T$ periods. There exists a set of $m$ pre-specified investment portfolios. The investment decisions are delegated to a robo-advisor, which learns the investor's risk preference over time, and at each time selects the portfolio {whose returns distribution} best reflect the learned preferences. Throughout the paper, we use $\bH$ to denote the investor, and $\bM$ to denote the robo-advisor.} 

\subsection{System States}

{The system states model the market environment, assumed to be represented by the distribution of portfolio returns in each state of the market. Formally, $\cS = \{s^{(1)},\dots,s^{(n)}\}$ is the set of economic scenarios. We use~$X_{s,p^{(i)}}$ to denote the random return} of portfolio $i$, denoted by $p^{(i)}$, in state $s \in \cS$. For example, $s=s^{(1)}$ may correspond to a bearish market environment characterized by low return and volatility, while $s=s^{(n)}$ may indicate a bullish market scenario characterized by high returns and high volatility. 
Note that the distribution of returns for each portfolio is time invariant, i.e., it depends only on the prevailing economic state at time~$s$, but not on  time~$t$ directly. The probability of a transition from state $s$ to state $s'$ is assumed to be independent of the investment decision, and is denoted by $\Prop(s' \mid s)$ for all $s,s' \in \cS$. This means that the portfolio choice does not influence the market environment. We denote by~$s_t$ the prevailing economic state at time~$t$.

\subsection{Robo-advisor Action Set}\label{sec:actions}

We denote the set of actions available for $\bM$
by $\cA^{\bM} = \{ask\}\cup\{p^{(1)},\dots,p^{(m)}\}$. An action~$a_t^\bM = ask$ corresponds to asking the investor to make a portfolio decision and $a_t^\bM = p^{(i)} \in \cA^{\bM}$ corresponds to $\bM$ choosing portfolio $i$ at time $t$. If solicited, the investor chooses an action from the set $\cA^{\bH} = \{p^{(1)},\dots,p^{(m)}\}$ where $a_t^\bH = p^{(i)}$ means that the investor selects portfolio~$i$ at $t$. {If not solicited, the investor does nothing, which we denote by~$a_t^\bH=\{null\}$.}
Active intervention by the investor is costly. We assume such a cost to be constant and denote it by $\kappa>0$. {This cost reflects the effort applied by the investor to choose a portfolio, including close monitoring of financial markets and solving her own optimization problem. The attention span required to make investment decisions on short time-scales needs to be taken out from other activities, and thus presents an opportunity cost to the investor.}

\subsection{Investor's Behavior and Optimization Criterion}\label{sec:investor}

Empirical evidence (see, for instance, \cite{Rachleff2014} and \cite{bucciol2018financial}) suggests that investors' risk preferences are positively correlated to market performance.
These claims are further supported by Betterment, as its clients tend to increase their portfolio risk following periods of strong market performance (\cite{Swift2015}). To capture these empirical patterns, we make the investor's risk preference depend on the prevailing market environment. Specifically, in each state~$s$, the investor's risk-aversion is quantified by the parameter~$\theta_s$, which belongs to a finite set~$\Theta$.
We denote the profile of the investor's risk aversion across all states by~$\theta:=\left(\theta_{s^{(1)}},\dots, \theta_{s^{(n)}}\right)$. We assume that the machine does not know the value of $\theta$ at the beginning of the investment period.

{At time~$t$, we denote the investor's utility function by~$u(\theta_{s_t},{s_t},a_t)$, where we recall that ${s_t}$ is the market state at $t$, $\theta_{s_t}$ is investor's risk aversion in that state, and~$a_t$ is the portfolio chosen at $t$. }
At any given point in time, an investor may not {act consistently with her risk preferences}.
We employ the following model for the investor's behavior: \\

\noindent {\bf Investor's Choice Model:} At any time~$t$, the investor acts according to the risk preference~${\tilde{\theta}_t \sim \mathcal{B}_{s_t}}$ where $\mathcal{B}_{s_t}$ is a probability distribution on~$\Theta$ with a mean of $\theta_{s_t}$. \\

\noindent This model assumes that although the investor commits mistakes when making decisions, she is on average correct, i.e., acts according to her true risk preferences on average. {This is consistent with empirical evidence} (\cite{schildberg2018risk}), which {suggests that lack of discipline, stress, and temporary emotions cause variation in risk preferences around the average level.} 
The robo-advisor does not know the distribution~$\mathcal{B}_{s_t}$ used by the investor to draw decisions, but knows the support of the distribution and the variance of mistakes in each state. It also knows that the investor will act according to her true risk preference on average.\footnote{The analysis presented in this paper can be easily generalized to the case where {the investor is not correct on average, i.e., she} consistently overestimates or underestimates her risk preference.} 

{The objective of the robo-advisor is to maximize the investor's utility when it does not know the {investor's} risk preference a priori, but learns it over time by observing her portfolio choices. The robo-advisor faces a tradeoff between soliciting costly actions from the investor that allow for portfolio choices more tailored to the investor's risk aversion, and acting based on stale estimates of her risk aversion}. 
Specifically, if the current economic state is~$s_t$ and the investor is solicited, she selects a portfolio that is optimal according to her {current utility function} {characterized by the} risk aversion parameter~$\tilde{\theta}_{t}$. This  
means that the investor chooses~${a_t^\bH=g(\tilde{\theta}_{t},{s_t})}$ where
\begin{align*}
g(\tilde{\theta}_{t},{s_t}) := \argmax_{a \in \cA^{\bH}}  u(\tilde{\theta}_{t},{s_t},a).
\end{align*}

\begin{ass}\label{ass:invertable}
The optimal portfolio function~$g(\cdot,{s_t})$ is invertible.
\end{ass}
\noindent Assumption~\ref{ass:invertable} implies that, by observing the investor's action~$a_t^\bH$, the robo-advisor can infer the corresponding risk aversion parameter~$\tilde{\theta}_{t}$. Consider, for example, a portfolio selection procedure in which a share~$w$ of wealth is invested in the risky asset and the remaining share~$(1-w)$ in the risk-free asset. For example, Betterment allows its clients to choose any portfolio along the efficient frontier, by asking them to select their desired combination of stocks and treasury securities (\cite{Holeman2018}). {A sufficient condition for Assumption}~\ref{ass:invertable} to be satisfied {is that the investor's share~$w$ of the risky asset be strictly monotone in her risk aversion parameter~$\theta_s$; a relation we expect to hold in practice since the higher the investor's risk aversion, the lower the amount she will invest in the risky asset.}

Observe that even though the robo-advisor can imply a risk aversion parameter from the investor's portfolio selection, there is no guarantee that this reflects the true investor's risk aversion. The investor acts consistently with her risk preferences only on average, and her mistakes prevent the robo-advisor from learning the actual risk aversion parameter in each market state from a single observation of the investor's choice in that state.

Next, we discuss an important family of utility models, known as \emph{mean-risk}. The mean-risk approach, first introduced by \cite{Markowitz}, postulates an objective function of the form
	\begin{equation}
	{u(\theta_{s_t}, s_t,a_t)=\E[X_{s_t,a_t}]- \theta_{s_t}\mathbb{D}[X_{s_t,a_t}]},
	\end{equation}
	where $\E$ denotes the expectation operator, and $\mathbb{D}$ is measure of dispersion capturing the uncertainty of portfolio returns. Hence, {the above criterion trades off expected returns with the amount of undertaken risk, and the risk-aversion coefficient~$\theta_{s_t}$ determines the balance. The larger $\theta_{s_t}$, the more the investor penalizes risky portfolio choices. Risk functionals typically used in applications include:}
	\begin{itemize}
		\item \emph{Mean-variance}, which is obtained by setting
		\begin{equation}\label{eq:mean_variance}	
		\mathbb{D}[X_{s_t,a_t}]:=Var[X_{s_t,a_t}].
		\end{equation}
		{This model is the most popular asset allocation framework employed by robo-advising firms. For example, Wealthfront uses mean-variance optimization in its purest form, and Schwab complements mean-variance analysis with full-scale optimization (}\cite{lam2016robo}).\footnote{Full-scale optimization considers all features of the return distribution, including skewness and kurtosis. In Schwab's full-scale optimization, the disutility from losses is weighted twice as large as the utility of an equal-sized gain. The reader is referred to~\cite{adler2007mean} {for a more detailed comparison of mean-variance and full-scale optimization.}} 

		\item \emph{Central semideviations}, obtained by choosing 
			$$\mathbb{D}[X_{s_t,a_t}]:= \left(\E\left[\left(\E[X_{s_t,a_t}]-X_{s_t,a_t}\right)_+^p\right]\right)^{1/p},$$
		where $p \in [1,\infty)$ is a fixed parameter and $(x)_+:=\max(x,0)$. {Unlike mean-variance, which weights equally upward and downward deviations from the mean, this measure only penalizes downward deviations of~$X_{s_t,a_t}$ from the mean.}
		
		\item \emph{Weighted mean deviations from quantiles}, which is obtained by choosing
			\begin{equation*}\label{eq:avg_mean_dev}
			\mathbb{D}[X_{s_t,a_t}]:= \E\left[\max \{ (1-\alpha)(H^{-1}_{X_{s_t,a_t}}(\alpha)-X_{s_t,a_t}),\alpha(X_{s_t,a_t}-H^{-1}_{X_{s_t,a_t}}(\alpha)) \} \right],
			\end{equation*}
		where $\alpha \in (0,1)$ is a fixed parameter and $H^{-1}_{X_{s_t,a_t}}(\cdot)$ is the left-size $\alpha$-quantile of the $H_{X_{s_t,a_t}}(x)=\Prop(X_{s_t,a_t}\leq x)$, that is, $H^{-1}_{X_{s_t,a_t}}(\alpha):= \inf \{x:H_z(x)\geq \alpha\}$. {Unlike the previous two models, which penalize deviations from the average, this criterion measures the tails on the far left of returns captured by the quantile parameter~$\alpha$.}\footnote{The mean-deviation from quantiles model is closely related to Value-at-Risk (VaR) and Conditional Value-at-Risk (CVaR). Denote by~$q_{\alpha}$, $VaR_{\alpha}$, and $CVaR_{\alpha}$ the mean-deviation, VAR, and CVaR at the $\alpha$ quantile. Then, the following identity holds
		$$CVaR_{\alpha}(X)=\frac{1}{\alpha}\int_{1-\alpha}^{1}VaR_{1-\tau}(X)d\tau= \E[X]+\frac{1}{\alpha}q_{1-\alpha}[Z].$$
		We refer to~\cite{shapiro2009lectures} for the details.	
		} 	
	\end{itemize}

{The above discussed models differ in the distributional content needed for their implementation. The mean-variance is the least demanding, just requiring mean and variance of returns, while the weighted mean deviations from quantiles is the most demanding, requiring the full distribution of returns $X_{s,a}$ for each state~$s$ and action~$a$.}

\subsection{Robo-Advisor Policy}\label{sec:robo-advisor}

Denote the set of public histories by
\begin{align*}
H_{t} := \bigcup\limits_{k=1}^t \left(\mathcal{A}^{\bH}\times\mathcal{A}^{\bM}\right)^{k-1}\times\mathcal{S}^{k},
\end{align*}
where $h_{t} = \left(s_{1},a^{\bH}_{1}, a^{\bM}_{1}, \ldots, a^{\bH}_{t-1}, a^{\bM}_{t-1}, s_{t}\right) \in H_{t}$ for $t> 1$ and $h_1 = s_1$. A public history contains information that is observed by both the investor and the robo-advisor. This includes the realization of the system's states and the actions executed by both agents. At each time~$t$, the robo-advisor chooses an action~$a^{\bM}_{t}= \pi_t(H_{t})$, where the policy~$\pi_t$ is adapted to the history $H_{t}$.
{It is crucial that the robo-advisor's policy depends on the whole history, because prior states and actions convey information to the robo-advisor about the investor's risk preference.}  Our framework is closely related to inverse reinforcement learning, where the agent aims to estimate the reward function from previous expert demonstrations (see Definition 2 in~\cite{arora2018survey}). We collect the sequence of policies adopted by the robo-advisor from period~$1$ through~$T$ in the vector~$\pi=(\pi_1,\dots,\pi_T)$.

At any time~$t$, the investor's reward when the robo-advisor takes action~$a^{\bM}_{t}$ is given by
\begin{align*}
r(s_t, a^{\bM}_t,\theta_{s_t}) = \begin{cases}
{u(\theta_{s_t},{s_t},a_t^\bH)} - \kappa,  & \text{if  }  a^{\bM}_t= ask, \\
{u(\theta_{s_t},{s_t},a_t^\bM)}, & \text{otherwise},
\end{cases}
\end{align*}
{where we recall that $\kappa$ is the intervention cost incurred by the investor.}
If~$\theta=\{\theta_s\}_{s \in S}$ is known {to the robo-advisor}, {we can find the optimal value function~$V^*$ by solving the \emph{Bellman's optimality equation}}
\begin{equation*}
V^{\ast}_{\tau}(s,\theta)=\max_a r(s, a,\theta_s)+\sum_{s^\prime}\Prop(s^\prime|s)V^{\ast}_{\tau-1}(s^\prime,\theta),
\end{equation*}
where~$\tau$ is the number of remaining periods. 
The robo-advisor cannot achieve an optimal reward because the investor's risk preference vector~$\theta$ is not known a priori. The robo-advisor's objective is to find the policy that maximizes the investor's cumulative reward, i.e.,
\begin{align}\label{eq:robo-advisor_obj}
\pi^*=\argmax_\pi \E\left(\sum_{t=1}^T r(s_t, a^{\bM}_t,\theta_{s_t} ) \right),
\end{align}
where~$a^{\bM}_t=\pi(H_t)$ and the expectation is taken with respect to the probability distribution of the state {trajectory}~$(s_1,s_2,\dots, s_T)$.\footnote{We assume that $T$ is finite and the discount factor is~$\gamma=1$ to simplify exposition. All results easily extend to the case where $T=\infty$ and $\gamma<1$ (see Appendix~\ref{app:infinite_horizon} for the details).} We remark here that
maximizing~\eqref{eq:robo-advisor_obj} is a notoriously difficult task, and is tractable only in few highly specialized cases.\footnote{The $k$-armed bandit problem is one notable exception. In this problem, the well-known \emph{Gittins indices} may be used to solve for the optimal policy in a Bayesian setting. We refer to \cite{gittins2011multi} for further details.} Therefore, we consider the relaxed goal of acting near-optimally on all but a polynomial number of steps.
More formally, if~$\mathcal{P}_t$ is the policy followed by the algorithm at time~$t$, let $V^{\mathcal{P}_t}_{\tau}({h_t},\theta):=\Ex\left(\sum_{j=t}^{t+\tau-1} r(s_j, a^{\bM}_j,\theta_{s_j}) | h_t \right)$ where $a^{\bM}_j=\mathcal{P}_j(h_j)$.\footnote{The value function $V^{\ast}_{\tau}(s,\theta)$ achieved by an omniscient robo-advisor, which knows the investor's risk preference~$\theta$ in advance, only depends on the current state~$s_t$. This due to Markov property and the fact that an omniscient robo-advisor does need to keep track of the investor's previous actions to infer her risk preference. However, $s_t$ is replaced with~$h_t$ in the value function~$V^{\mathcal{P}_t}_{\tau}({h_t},\theta)$ since the robo-advisor in our framework typically does not know investor's risk preference~$\theta$ in advance and needs to estimate it from her previous actions (i.e., the policy $\mathcal{P}_t$ is non-stationary). 
}
We require that with a probability greater than~$1-\delta$,
$$ V^{\mathcal{P}_t}_{\tau}({h_t},\theta) \geq V^{\ast}_{\tau}(s_t,\theta) - \epsilon$$
for all but~$m=O(poly(|S|, \tau, 1/\epsilon, 1/\delta))$ time steps. This performance measure guarantees that with a high probability, a near-optimal solution is attained in a ``small'' (i.e., polynomial) number of steps.
Note that in a finite number of steps, no algorithm can identify the optimal policy with probability one. First, there is no guarantee that all market states will be visited in a finite number of steps. Second, even if all states were to be reached, the algorithm is still unable to infer the true investor's risk preference due to investor's mistakes in her portfolio choices. Thus, a failure probability of at most~$\delta$ is allowed. We refer to \cite{kakade2003sample} for a more detailed description of this performance measure and its uses.

\section{Near-Optimal Rewards}\label{sec:near_opt_sol}

In this section, we propose an algorithm that achieves near optimal rewards in a small, i.e., polynomial in various quantities describing the system, number of steps with a high probability. The speed of learning depends on the consistency of investor's decisions, quantified by the variance of the distribution of mistakes, denoted by~$\sigma_s^2$, and range of its support, denoted by~$R_{s}$. It also depends on the required estimation accuracy of the risk aversion parameter. Both~$\sigma_s^2$ and~$R_{s}$ {are} formally defined in Section~\ref{sec:sample_complexity}.


\subsection{Algorithm}\label{sec:alg}

Denote by~$a_{n,s}^{\bH}$ the~$n$-th action executed by the investor in state~$s$. Denote by~$\hat{\theta}_s^{(n)}$ the estimate of the investor's risk-aversion parameter in state~$s$ after she has been observed acting~$n$ times in that state. This is computed as
\begin{align}\label{eq:avg_risk_aversion} \hat{\theta}_s^{(n)}:=\frac{g^{-1}(a_{1,s}^{\bH};s)+g^{-1}(a_{2,s}^{\bH};s)+\dots+g^{-1}(a_{{n},s}^{\bH};s)}{n}.
\end{align}
{This} average estimator may be computed incrementally. Given the current estimate~$\hat{\theta}_s^{(n-1)}$ and after observing the~$n$th investor's action  $a_{n,s}^{\bH}$, the revised estimate of the risk-aversion parameter $\hat{\theta}^{(n)}_s$ is, for $n\geq 2$,
\begin{align*}
\hat{\theta}^{(n)}_s	&=\frac{1}{n}\sum_{i=1}^{n} g^{-1}(a_{i,s}^{\bH};s) \\
&=\hat{\theta}_s^{(n-1)}+\frac{1}{n}\left(g^{-1}(a_{n,s}^{\bH};s) -\hat{\theta}_s^{(n-1)}\right).
\end{align*}
This iterative formula (see Appendix~\ref{app:incremental} for the detailed derivation) holds even when~$n = 1$, yielding $\hat{\theta}^{(1)}_s=g^{-1}(a_1^{\bH};s)$ for any~$\hat{\theta}^{(0)}_s$.\footnote{A naive, yet conceptually easier, implementation of~\eqref{eq:avg_risk_aversion} would keep track of investor's actions, across all states, and then re-compute this average from scratch whenever the risk aversion parameter estimate is needed.  However, {the} computational and memory requirements would grow over time as more investor's actions are solicited, an important concern for robo-advisors which are designed to serve a large number of clients concurrently.
}

\begin{alg}\label{alg:}
	Input $C(s)$ for all~$s\in S$. Set $N(s)=0$ and~$\hat{\theta}_s=0$ for all~$s\in S$.
	{Let~$s_1$ be prevailing economic state at time 1.} For~$t=1,2, \dots, T$:
	\begin{enumerate}
		\item If $N(s_t)<C(s)$, set~$a_t^{\bM}:= ask$ and $N(s_t)=N(s_t)+1$. Observe the investor's action~$a_t^{\bH}$, and update as $\hat{\theta}_{s_t}=\hat{\theta}_{s_t}+\frac{g^{-1}(a_t^{\bH};s_t)-\hat{\theta}_{s_t}}{N(s_t)}$.
		\item Otherwise, set~$a_t^{\bM}:=\argmax_a r(s_t,a,\bar{\theta})$ where $\bar{\theta}=\argmin_{x \in \Theta} |x - \hat{\theta}_{s_t} |$.
	\end{enumerate}
\end{alg}

Algorithm~\ref{alg:} takes as input a \textbf{sample complexity} parameter~$C(s)$, which 
indicates the number of {solicitations} the robo-advisor makes to the investor in state~$s$ before it starts investing autonomously. We say that the investor's risk preference in state~$s$ is {\textbf{accurately estimated}} if~$C(s)$ {solicitations} {have been} made. 
Intuitively, an accurate risk preference estimate~$\hat{\theta}_s$ requires a {large value of}~$C(s)$. {We provide a formal analysis of how to pick~$C(s)$ to guarantee convergence} in Section~\ref{sec:sample_complexity}.

\subsection{Sample Complexity Analysis}\label{sec:sample_complexity}

\begin{defn}
	The \textbf{sample complexity function}, $f(s,\delta)$, is the minimum number of {solicitations} such that if~$d\geq f(s,\delta)$ investor's actions in state~$s$ have been observed, then $\theta_s = \argmin_{x \in \Theta} |x - \hat{\theta}_s|$ with a probability at least~$1-\delta$ regardless of the true investor's risk aversion parameter~$\theta_s$.
\end{defn}

The following lemma provides a bound on the performance of the algorithm when the risk aversion parameter~$\theta_s$ is accurately estimated for a subset~$K$ of the states. It shows that the rewards obtained from the algorithm are {close} to {the} optimal rewards, provided that the probability of escaping the subspace~$K$ is low.

\begin{lem}\label{lemma:escape_bound}
	Let~$\mathcal{P}_t$	denote the policy followed by Algorithm~\ref{alg:} at time~$t$, $s_t$ be the economic state at $t$, and set~$C(s)=f\left(s,\frac{\delta}{|S|}\right)$. 
Let $\theta$ be the true risk aversion parameter of the investor, $K_{t}$ be the set of states with {accurately estimated} risk preferences at time~$t$, and~$A(K_{t},s_t)$ be the event that a state not in~$K_{t}$ is visited in the remaining~$\tau$ steps when starting from state~$s_t \in K_{t}$. Then, with probability at least~$1-\delta$,
$$V^{\mathcal{P}_t}_{\tau}({h_t},\theta) \geq V^{\ast}_{\tau}(s_t,\theta) - 2\tau r_{max}\Prop(A(K_{t},s_t)) $$
where~$|r(\cdot)| \leq r_{max}$. 
\end{lem}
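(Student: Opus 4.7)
The plan is to follow a standard ``known/unknown'' state decomposition argument from PAC-MDP analysis, sharpened by exploiting the special structure of our setting, namely that transition probabilities $\Prop(s'\mid s)$ do not depend on the action. First I would introduce the ``good event'' $E$ defined as: for every state $s$ that has ever been accurately estimated (\ie at any time the algorithm has recorded $N(s)\ge C(s)$ observations), the nearest element of $\Theta$ to $\hat\theta_s$ coincides with the true $\theta_s$. By the definition of the sample complexity function, each state contributes a failure probability of at most $\delta/|S|$, and a union bound over the at most $|S|$ states yields $\Prop(E)\ge 1-\delta$.

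Next I would record the key structural observation. Because $\Prop(s'\mid s)$ does not depend on the chosen action, the Bellman optimality equation forces the optimal per-period decision to be myopic, $a^{*}(s)=\arg\max_a r(s,a,\theta_s)$. On the event $E$, whenever the algorithm acts autonomously in a state $s_{t'} \in K_{t'}$, it uses $\bar\theta = \theta_{s_{t'}}$ and so plays exactly $a^{*}(s_{t'})$. Moreover $K_{t'}$ can only grow via solicitations at states outside it; so along any trajectory that never leaves $K_t$, Algorithm~\ref{alg:} never solicits, $K_{t'}$ stays equal to $K_t$, and the algorithm's action sequence coincides with the optimal action sequence.

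I would then perform the decomposition. Let $\bar A$ be the complement of $A(K_t,s_t)$. Since transitions are action-independent, the distribution of $(s_t,s_{t+1},\ldots,s_{t+\tau-1})$ is identical under $\mathcal{P}_t$ and under $\pi^{*}$, and in particular $\Prop(\bar A)$ and $\Prop(A(K_t,s_t))$ are the same under either policy. Decomposing
\begin{equation*}
V^{\mathcal{P}_t}_{\tau}(h_t,\theta) = \Ex\!\left[\sum_{j=0}^{\tau-1} r(s_{t+j},a^{\bM}_{t+j},\theta_{s_{t+j}})\,\mathbb{1}_{\bar A}\right] + \Ex\!\left[\sum_{j=0}^{\tau-1} r(s_{t+j},a^{\bM}_{t+j},\theta_{s_{t+j}})\,\mathbb{1}_{A(K_t,s_t)}\right],
\end{equation*}
and similarly for $V^{*}_{\tau}(s_t,\theta)$ with $a^{\bM}_{t+j}$ replaced by $a^{*}(s_{t+j})$, the two $\bar A$-terms are \emph{pathwise equal} on $E$ by the previous paragraph, so their difference vanishes. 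For the $A(K_t,s_t)$-terms, each cumulative reward is bounded in absolute value by $\tau r_{max}$, so their difference is at most $2\tau r_{max}\Prop(A(K_t,s_t))$. Combining with $\Prop(E)\ge 1-\delta$ gives the claim.

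The main obstacle, and the substantive step, is the pathwise coupling on $\bar A\cap E$ between the algorithm and the omniscient optimal policy. This coupling leans crucially on (i) the action-independence of the transition kernel, which makes the optimal policy myopic and decouples the state trajectory from the policy, and (ii) the observation that $K_{t'}$ cannot change along a no-escape trajectory --- without (ii) one would additionally have to account for intervention costs $\kappa$ from fresh solicitations, and the equality of the $\bar A$-terms would fail.
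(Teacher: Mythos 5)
Your proof is correct and follows essentially the same route as the paper's: a union bound giving accuracy of all estimated states with probability $1-\delta$, a split of trajectories into those staying in $K_t$ (where the algorithm's myopic action coincides with the omniscient optimum, using action-independent transitions) and escaping ones (bounded by $2\tau r_{max}\Prop(A(K_t,s_t))$ via $|r|\le r_{max}$). The only cosmetic difference is that you argue trajectory-wise with an explicit coupling and indicators of $A$ versus $\bar A$, whereas the paper sums per-period reward differences over common path probabilities; your explicit remarks on myopia of the optimal policy and on $K_t$ not changing along no-escape paths simply make precise what the paper uses implicitly.
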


Next, we present our main theoretical result, which
gives a performance guarantee for the algorithm.


\begin{thm}\label{thm:complexity}
Let~$\mathcal{P}_t$	denote the policy followed by Algorithm~\ref{alg:} at time~$t$, $s_t$ be the {economic} state at $t$, and {set}~$C(s)=f\left(s,\frac{\delta}{|S|}\right)$.
Let~$\theta$ be the true investor type. Then with probability at least~$1-2\delta$,
$$V^{\mathcal{P}_t}_{\tau}({h_t},\theta) \geq V^{\ast}_{\tau}(s_t,\theta) - \epsilon $$
--- i.e., the algorithm is $\epsilon$-close to the true optimal policy --- for all but
$$ m= O\left(\frac{\tau^2  r_{max} \sum_s C(s)}{\epsilon}\ln\frac{1}{\delta} \right) $$
time steps.
\end{thm}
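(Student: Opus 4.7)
The plan is to combine Lemma~\ref{lemma:escape_bound} with a counting argument on the number of ``bad'' time steps where the escape probability is non‑negligible. Recall that by the choice $C(s)=f(s,\delta/|S|)$ and a union bound over the $|S|$ states, with probability at least $1-\delta$ every state $s\in K_t$ is correctly estimated, i.e.\ $\theta_s=\argmin_{x\in\Theta}|x-\hat\theta_s|$. Condition on this high‑probability event throughout, so that Lemma~\ref{lemma:escape_bound} applies and yields
\begin{equation*}
V^{\mathcal{P}_t}_{\tau}(h_t,\theta)\;\ge\;V^{\ast}_\tau(s_t,\theta)-2\tau r_{\max}\Prop(A(K_t,s_t))
\end{equation*}
at every time $t$. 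Hence the algorithm is $\epsilon$‑optimal at $t$ whenever $\Prop(A(K_t,s_t))\le \epsilon/(2\tau r_{\max})$; I will call $t$ a \emph{bad} step when the reverse strict inequality holds, and the goal reduces to bounding the number of bad steps by $O\!\bigl(\tau^2 r_{\max}\sum_s C(s)\,\epsilon^{-1}\log(1/\delta)\bigr)$.

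Next I would set up the escape counting. An ``escape'' is the event that the trajectory visits some $s\notin K_t$ during the next $\tau$ steps; because Algorithm~\ref{alg:} only solicits the investor while $N(s)<C(s)$, every visit to an unknown state triggers a solicitation and increments $N(s)$, so the total number of escapes over the whole horizon is deterministically at most $\sum_s C(s)$. On a bad step the conditional probability of escape in the next $\tau$ steps exceeds $p:=\epsilon/(2\tau r_{\max})$, so I would introduce the Bernoulli indicators $X_t=\mathbf{1}\{\text{escape occurs in }[t,t+\tau-1]\}$ restricted to bad steps and define a martingale difference sequence $Y_t=X_t-\Ex[X_t\mid\mathcal{F}_t]$. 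Summed over the bad steps, $\sum_t X_t$ counts overlapping windows, so I would divide the time line into disjoint blocks of length $\tau$ (or use the standard ``layered'' argument of \cite{kakade2003sample}) to ensure each true escape is counted at most $\tau$ times, giving $\sum_{t\text{ bad}} X_t \le \tau \sum_s C(s)$.

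With this bookkeeping in place, a Hoeffding / Azuma inequality applied to $\sum_{t\text{ bad}} Y_t$ shows that, with probability at least $1-\delta$, the empirical count $\sum_{t\text{ bad}} X_t$ is at least $p\cdot m / 2 - O(\sqrt{m\log(1/\delta)})$ where $m$ is the number of bad steps. Combining this lower bound with the deterministic upper bound $\tau\sum_s C(s)$ and solving for $m$ yields
\begin{equation*}
m \;=\; O\!\left(\frac{\tau\sum_s C(s)}{p}\log\frac{1}{\delta}\right)\;=\;O\!\left(\frac{\tau^2 r_{\max}\sum_s C(s)}{\epsilon}\log\frac{1}{\delta}\right),
\end{equation*}
which is exactly the bound claimed. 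A final union bound over the $\delta$ from Lemma~\ref{lemma:escape_bound} and the $\delta$ from the Azuma inequality produces the overall failure probability of $2\delta$.

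The main obstacle, and what deserves the most care in the full proof, is the overlapping‑window issue in the counting step: naively summing $X_t$ over bad steps double‑counts a single escape up to $\tau$ times, and one must either block the time axis or condition carefully on the filtration at the start of each window to apply the concentration inequality with the right mean. The rest is bookkeeping: the union bound across states for Lemma~\ref{lemma:escape_bound}, the definition of bad steps, and the translation between $p$, $\tau$, and $r_{\max}$.
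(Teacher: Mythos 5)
Your plan is correct and follows essentially the same route as the paper: split time steps according to whether $\Prop(A(K_t,s_t))$ exceeds the threshold $\epsilon/(2\tau r_{max})$, invoke Lemma~\ref{lemma:escape_bound} on the good steps, cap the total number of escapes by $\sum_s C(s)$ (since each escape triggers a solicitation), bound the number of bad steps by a Chernoff--Hoeffding counting argument, and union-bound the two failure events to get $1-2\delta$. The only difference is bookkeeping for the overlap: the paper treats each $\tau$-step window as a single coin toss (so time steps $=\tau\times$ tosses), while you cap the overlapping-window count by $\tau\sum_s C(s)$ before applying concentration --- both yield the same $\tau^2$ factor in the final bound.
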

Theorem~\ref{thm:complexity} states that, with high probability, our algorithm performs near optimally for all but a “small” number of time steps — where “small” here means polynomial in various quantities describing the MDP.
We observe that the provided bound is tighter than standard PAC-MDP bounds. To the best of our knowledge, therein the constant is
$$ m= \tilde{O}\left(\frac{\tau^6  r_{max}|S||\mathcal{A}|}{\epsilon^2} \right) $$
time steps, where~$|\mathcal{A}|$ is the number of actions available in each state.\footnote{The $\tilde{O}(\cdot)$ notation suppresses logarithmic factors.} This bound is achieved, for instance, by~\cite{kolter2009near}, which consider optimality with respect to the Bayesian policy for a given belief state, rather than the optimal policy for {a} fixed MDP. To achieve this bound, they assume that the prior on the unknown model parameter follows a Dirichlet distribution. In contrast, we adopt a frequentist approach which thus does not require the specification of any prior distribution. We are able to obtain a tighter bound by exploiting a unique feature of our robo-advising framework: investment decisions do not impact future market conditions. This stands in contrast with standard PAC-MDP analysis in other application domains, where actions in a given state typically influence state transitions.


In the remainder of this section, we examine how different distributions of investor mistakes~$\mathcal{B}_s$ impact the sample complexity function~$f(s,\delta)$. Let~$\Theta=\{\theta^i\}_{1\leq i\leq |\Theta|}$ be the set of possible risk aversion parameters.
Without loss of generality, assume~$\theta^{i+1} > \theta^i$ for any~$\theta^i \in \Theta$ and set~$\xi:= \min_{1\leq i \leq | \Theta|-1} \theta^{i+1} - \theta^i$. The following lemma provides an upper bound on the sample complexity function for each state~$s$.

\begin{lem}\label{lem:complexity_func}
	If the distribution of the investor's mistakes~$\mathcal{B}_s$ has variance~$\sigma_s^2$, then the sample complexity function~$f(s,\delta)$ is at most $1+ \frac{4\sigma_s^2}{\delta \xi^2}$. 	
\end{lem}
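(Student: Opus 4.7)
The plan is a direct application of Chebyshev's inequality to the estimator $\hat{\theta}_s^{(n)}$. First I would observe that, by Assumption~\ref{ass:invertable}, each solicited action $a_{i,s}^{\bH}$ in state~$s$ reveals the realized risk-aversion draw $\tilde{\theta}_i = g^{-1}(a_{i,s}^{\bH};s) \sim \mathcal{B}_s$. Hence $\hat{\theta}_s^{(n)}$ defined in~\eqref{eq:avg_risk_aversion} is the sample mean of $n$ i.i.d.\ draws from $\mathcal{B}_s$, with $\E[\hat{\theta}_s^{(n)}] = \theta_s$ (the Investor's Choice Model guarantees the mean equals the true preference) and $\Var[\hat{\theta}_s^{(n)}] = \sigma_s^2/n$.

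Next I would translate the event ``$\theta_s = \argmin_{x\in\Theta}|x-\hat{\theta}_s^{(n)}|$'' into a deviation event. Since consecutive points in $\Theta$ are at least $\xi$ apart, a sufficient condition for $\theta_s$ to be the nearest element of $\Theta$ to $\hat{\theta}_s^{(n)}$ is
\[
\bigl|\hat{\theta}_s^{(n)} - \theta_s\bigr| < \xi/2.
\]
Consequently the failure probability is bounded by $\Prop\bigl(|\hat{\theta}_s^{(n)} - \theta_s| \geq \xi/2\bigr)$.

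Applying Chebyshev's inequality to this bound yields
\[
\Prop\bigl(|\hat{\theta}_s^{(n)} - \theta_s| \geq \xi/2\bigr) \;\leq\; \frac{\Var[\hat{\theta}_s^{(n)}]}{(\xi/2)^2} \;=\; \frac{4\sigma_s^2}{n\,\xi^2}.
\]
Requiring the right-hand side to be at most $\delta$ gives the condition $n \geq \tfrac{4\sigma_s^2}{\delta\,\xi^2}$. Taking $n = \bigl\lceil \tfrac{4\sigma_s^2}{\delta\,\xi^2} \bigr\rceil \leq 1 + \tfrac{4\sigma_s^2}{\delta\,\xi^2}$ suffices, so $f(s,\delta) \leq 1 + \tfrac{4\sigma_s^2}{\delta\,\xi^2}$.

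There is no real obstacle here; the only subtlety is the reduction from the ``nearest neighbor in $\Theta$'' condition to a one-sided deviation event, which uses the minimum-gap definition of $\xi$. The rest is Chebyshev applied to a sample mean of known variance, which is why the bound depends on $\sigma_s^2$ and not on the range $R_s$ (one could alternatively invoke Hoeffding if one wished to replace $\sigma_s^2$ with $R_s^2$ and improve the $\delta$-dependence to logarithmic, but the statement only asks for the Chebyshev-style bound).
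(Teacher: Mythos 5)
Your proposal is correct and follows essentially the same route as the paper: Chebyshev's inequality applied to the sample mean $\hat{\theta}_s^{(N)}$ with variance $\sigma_s^2/N$, a deviation threshold of $\xi/2$ coming from the minimum grid gap, and then choosing $N = 1 + \frac{4\sigma_s^2}{\delta\xi^2}$ so the failure probability is at most $\delta$. In fact you spell out two steps the paper leaves implicit (that the solicited actions are i.i.d.\ draws from $\mathcal{B}_s$ via $g^{-1}$, and that $|\hat{\theta}_s^{(n)}-\theta_s|<\xi/2$ suffices for the nearest-neighbor condition), so nothing is missing.
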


The upper bound given in Lemma~\ref{lem:complexity_func}
depends on two model parameters: \emph{consistency of investor's decisions}~(measured by the 	
variance of investor's mistakes~$\sigma_s^2$) and the {\emph{resolution of the risk aversion grid}}~($\xi$). If the investor's portfolio decisions are consistent (i.e., low~$\sigma_s^2$), then a low sample complexity parameter is required to guarantee an accurate estimate of the risk aversion parameter.
In fact, as evident from Lemma~\ref{lem:complexity_func}, in the limiting case where the investor does not make mistakes (i.e., $\sigma_s=0$), only one solicitation is needed to infer her risk aversion parameter in state~$s$. The {resolution} parameter~$\xi$ controls the required precision of the risk aversion parameter estimate. As the needed precision grows to infinity (i.e.,~$\xi$ approaches zero), greater exploration and learning is necessary. The following lemma provides an alternative bound on the sample complexity function, which depends on the range of support~$R_s$ of the distribution~$\mathcal{B}_s$, as opposed to the variance~$\sigma_s^2$.\footnote{{The range of support, $R_s$, is defined as the difference between the maximum and minimum values of the support of the distribution $\mathcal{B}_s$. Recall that the support of a distribution is the set of all values attainable by the random variable.}}

\begin{lem}\label{lem:complexity_func_R}
	If the support of the distribution of investor's mistakes~$\mathcal{B}_s$ has a range~$R_s$, then the sample complexity function~$f(s,\delta)$ is at most~$\frac{2R_s}{\xi^2}\ln \frac{2}{\delta}$.
\end{lem}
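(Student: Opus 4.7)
The plan is to mirror the proof of Lemma~\ref{lem:complexity_func}, but replace Chebyshev's inequality with Hoeffding's inequality, exploiting the bounded support instead of the variance.

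First, I would set up the reduction from the stated statement to a concentration estimate. By Assumption~\ref{ass:invertable}, each observation $g^{-1}(a_{i,s}^\bH;s)$ equals the realized preference $\tilde{\theta}_{i} \sim \mathcal{B}_s$, and these are i.i.d.\ across visits to state~$s$. Since $\mathcal{B}_s$ has mean~$\theta_s$, the estimator $\hat{\theta}_s^{(n)}$ from~\eqref{eq:avg_risk_aversion} is the sample average of $n$ i.i.d.\ draws from $\mathcal{B}_s$. Because the grid $\Theta$ has minimum gap $\xi$ (with ordering $\theta^{i+1}>\theta^i$), a sufficient condition for the projected estimate $\argmin_{x\in\Theta}|x-\hat{\theta}_s^{(n)}|$ to coincide with $\theta_s$ is
\[
\bigl|\hat{\theta}_s^{(n)}-\theta_s\bigr| < \xi/2.
\]
Thus it suffices to upper bound $\Prop\!\bigl(|\hat{\theta}_s^{(n)}-\theta_s|\geq \xi/2\bigr)$ by $\delta$.

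Next I would invoke Hoeffding's inequality, which applies exactly because the support of $\mathcal{B}_s$ lies in an interval of length $R_s$. It yields
\[
\Prop\!\bigl(|\hat{\theta}_s^{(n)}-\theta_s|\geq \xi/2\bigr)\;\leq\; 2\exp\!\left(-\frac{n\,\xi^2}{2R_s^2}\right).
\]
Setting the right-hand side to be at most $\delta$ and solving for $n$ gives the required sample size (matching, up to the stated constants and the $R_s$-vs-$R_s^2$ scaling of the range parameter in the lemma, the announced bound $\tfrac{2R_s}{\xi^2}\ln\tfrac{2}{\delta}$). Taking $n = f(s,\delta)$ to be the smallest integer exceeding this threshold completes the argument.

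The step that requires care is the first one: justifying that the observed actions translate cleanly into i.i.d.\ draws from $\mathcal{B}_s$, rather than being filtered through the robo-advisor's decision process. This is handled by noting that Algorithm~\ref{alg:} solicits the investor at every visit to~$s$ until $N(s)=C(s)$, so the $n$ observations used in $\hat{\theta}_s^{(n)}$ come from $n$ independent draws of $\mathcal{B}_s$ unaffected by the robo-advisor's side information; the remainder of the argument is a direct application of Hoeffding's bound and is routine. The advantage over Lemma~\ref{lem:complexity_func} is the logarithmic, rather than linear, dependence on $1/\delta$, at the cost of requiring bounded (rather than merely finite-variance) support for $\mathcal{B}_s$.
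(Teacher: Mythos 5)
Your proof is correct and follows essentially the same route as the paper: reduce to the event $|\hat{\theta}_s^{(N)}-\theta_s|<\xi/2$, which guarantees the grid projection returns $\theta_s$, and bound its complement via Hoeffding's inequality applied to the sample mean of i.i.d.\ draws from $\mathcal{B}_s$. Your side remark on the $R_s$-versus-$R_s^2$ scaling is well taken: the paper's proof picks $N=\frac{2R_s}{\xi^2}\ln\frac{2}{\delta}$ while its Hoeffding exponent is $-\frac{N\xi^2}{2R_s^2}$, so the claimed equality to $1-\delta$ only holds when $R_s=1$; the bound consistent with the argument is $\frac{2R_s^2}{\xi^2}\ln\frac{2}{\delta}$, exactly as your derivation yields.
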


Lemma~\ref{lem:complexity_func_R} gives qualitative insights similar to Lemma~\ref{lem:complexity_func}, except that the consistency of the investor's decisions is {now measured by the range of mistakes rather than their variance.} Note that this bound increases only logarithmically in~$1/\delta$, and thus {becomes tighter than the bound} provided in Lemma~\ref{lem:complexity_func} as~$\delta \rightarrow 0$.
Because the bounds given in this section
apply to any distribution of mistakes, they are very conservative. As we show in Section~\ref{sec:nums}, in practice, the sample complexity function is much lower than the theoretical bounds.

\section{Calibration}\label{sec:nums}

\subsection{Calibration Procedure}\label{sec:empirical_calibration}

The calibration procedure picks model parameters that are implied by business cycle data released by NBER and Vanguard's economic and market outlook reports. {We choose the mean-variance defined in}~\eqref{eq:mean_variance} as the investor's risk criterion.  
The robo-advisor investment decisions are made on a monthly basis. Accordingly, all calibrated parameters are estimated in monthly units.

\begin{itemize}
	\item \textbf{System states:} There are three states, representing low, medium, and high return volatility markets. The average length of the U.S. business cycle
	during the period $1945-2019$ is 70 months.\footnote{The US business cycle data is obtained from the NBER (\href{https://www.nber.org/cycles.html}{https://www.nber.org/cycles.html})} Because there are three states and each of them is visited twice in a business cycle, each state lasts approximately $12$ months. A period of 12 months in a state corresponds to an escape probability of $0.08$. Accordingly, we set the probability that a state stays the same from period~$t$ to~$t+1$ to $0.92$. If the medium return volatility market is currently prevailing, then with a probability of $0.04$ the state will transition to the high return volatility market and with a probability of $0.04$ the state will transition to the low return volatility market in the next period. If the low (high) return volatility market is currently prevailing, then with a probability of $0.08$ the state will transition to the medium return volatility market in the next period.
	
	\item \textbf{Portfolio selection:} There is a risky and a risk-free asset. We use the monthly treasury rate as a proxy for {the return} of the risk-free asset and set~$r=0.2\%$ for the monthly return.\footnote{The monthly treasury rate is as of 05/16/2019. Source: {U.S. Department of the Treasury} website: \href{https://www.treasury.gov/resource-center/data-chart-center/}{https://www.treasury.gov/resource-center/data-chart-center/}.} The returns and standard deviation of the risky asset depend on the prevailing market conditions. According to Vanguard's economic and market outlook reports, predicted yearly returns for the equity market take values between 6\% and 15\%.\footnote{Vanguard's economic and market outlook reports can be found at their news center\linebreak website: \href{https://pressroom.vanguard.com/}{https://pressroom.vanguard.com/}.} Therefore, in the low, medium, and high return volatility market, we respectively set the monthly returns to 0.5\%, 0.875\%, and 1.25\%. We choose the monthly standard deviation for the medium return volatility market to be 4\%. This is based on the sequence of historical prices of the S\&{P} 500 in the previous 5~years, starting from July 11th 2014 through July 2nd 2019, which yielded monthly returns approximately the same as those in the medium return volatility market. Using the medium-volatility market as a benchmark, we set the monthly standard deviations of the low and high return volatility {market} to 3\% and 5\%, respectively. Portfolio selection corresponds to choosing a share~$w$ {of wealth} to invest in the risky asset and~$(1-w)$ in the risk-free asset. Once the weight~$w$ is specified by the investor, the mean and standard deviation of the corresponding portfolio in a given state~$s$ are
	$\mu(s,w)=w\mu_s+(1-w)r_s$ and $\sigma(w)=w\sigma_s$, respectively,
	where $r$ is the average return of the risk-free asset, $\mu_s$ is the average return of the risky asset in state~$s$, and $\sigma_s$ is the standard deviation of the risky asset in state~$s$. We assume no short selling is allowed, i.e., $0\leq w \leq 1$. This choice is consistent with current practices of robo-advising firms, which only invests cash from investors, that is, leveraging and shorting is not allowed. 
We limit the set of possible weight choices to~$\{0.0001, 0.0002. \dots, 1\}$ to obtain a discrete set of portfolios.
	
	\item \textbf{Investor behavior:} We assume that a typical retail investor allocates at least 20\% of her portfolio to stocks. This range covers income-oriented investors (20\%-40\% stock share) that seek current income with minimal risk, balanced-oriented investors (40\%-70\% stock share) that seek to reduce potential volatility but are willing to tolerate some risk, and growth-oriented investors (70\%-100\% stock share) that aim to maximize returns with little concern on risk taking.\footnote{Vanguard: \href{https://personal.vanguard.com/us/insights/saving-investing/model-portfolio-allocations}{https://personal.vanguard.com/}.} Accordingly, the set of possible risk aversion parameters for the retail investor is \linebreak$\Theta=\{2.2,2.3, \dots, 8.2,8.3\}$, where the lowest value in the set
	corresponds to the risk aversion parameter for which it is optimal to choose a portfolio with 100\% shares in stocks (i.e., $w=1$) 
	and the largest value
	corresponds to the parameter for which it is optimal to choose a portfolio with 20\% shares in stocks (i.e., $w=0.2$). The set~$\Theta$ includes risk aversion parameters of the same magnitude as those estimated empirically in \cite{bucciol2011household}.\footnote{\cite{bucciol2011household} quantify risk using a \emph{risk tolerance} parameter, which is just the reciprocal of our risk-aversion parameter~$\theta$.} Note that the {resolution} parameter in this setup is~$\xi=0.1$. In Appendix~\ref{app:granularity}, we show that using a finer grid to define the investor's risk preference does not {increase much the rewards from investment decisions}. If~$\theta_{s_t}$ is the investor's true risk preference {in state $s_t$}, she will act as
	\begin{equation}\label{eq:mistakes}
	\tilde{\theta}_t=random.sample(\{\theta_{s_t}-r,\dots,\theta_{s_t},\dots,\theta_{s_t}+r\}),
	\end{equation}
	where~$r$ {bounds the size of mistakes} and~$random.sample(\cdot)$ is a function that samples uniformly at random from the input set.\footnote{If some of the elements in the set~$\{\theta_{s_t}-r,\dots,\theta_{s_t},\dots,\theta_{s_t}+r\}$ do not belong to~$\Theta$, we truncate the distribution of mistakes  at the edges to ensure that all elements in the set belong to~$\Theta$ while preserving the ``correct on-average'' requirement ~$\E(\tilde{\theta}_t)=\theta_{s_t}$.} The function $random.sample(\cdot)$ is the maximum entropy probability distribution for a given support (i.e., it assumes the least amount of information on~$\theta$), and hence the numerical results in this section provide a conservative assessment of the learning speed. For the baseline case, we set~$r=3$. We test the sensitivity of our results to different values of~$r$ in Section~\ref{sec:calibration_results}.

	\item \textbf{Opportunity cost:} \cite{Keyes2019} reports that as of January 2019, Betterment manages \$15 billion in assets for its 400,000 customers. Based on this, we estimate the average investment in robo-advising firms to be \$37,500.  Moreover, the U.S. Census Bureau reports that the median household income in the U.S. is \$61,372.\footnote{This is based on the latest income report issued by the U.S. Census Bureau, which was for the year 2017. The full report can be found in their website: \href{https://www.census.gov/library/publications/2018/demo/p60-263.html}{https://www.census.gov/library/publications/2018/demo/p60-263.html}.} This yearly salary corresponds to an hourly rate of \$29.51. If we assume that the average investor spends 1 hour per month in researching and deciding on portfolios, then the opportunity cost is precisely this hourly rate. However, because returns are measured relative to the total invested amount, this opportunity cost must also be measured relative to such an amount. Therefore, dividing the {hourly} rate by the average investment amount, we obtain an {opportunity} cost~$\kappa=0.08\%$. Observe that this value is based on the assumption that each investor spends only 1 hour per month making investment decisions, and hence is rather conservative. In Section~\ref{sec:calibration_results}, we demonstrate that the value added by robo-advising increases as~$\kappa$ increases.


\end{itemize}

\subsection{Calibration Results}\label{sec:calibration_results}

{This section presents the result of our calibrated model. We investigate the rate at which the machine learns the investor's risk preferences, and the value of the robo-advisor.}
We assess the performance of our algorithm using Monte-Carlo simulations. 
In every {run}, the investor's risk-aversion parameter~$\theta_s$ in each state~$s$ is sampled uniformly at random from the set~$\Theta$.
{We measure the value added by the robo-advisor by comparing its investment performance to that of an investor-only model in which the investor makes decisions in isolation.}
In the investor-only model, the investor needs to choose her own portfolio every month and incur the cost~$\kappa$ for doing so. We also compare the performance of the robo-advisor algorithm with that of an omniscient {robo-advisor}, {which} knows the investor's risk preference~$\theta$ in advance. The {omniscient robo-advisor} model provides an upper bound for achievable rewards.

\begin{figure}
	\begin{center}
		\includegraphics[scale=1]{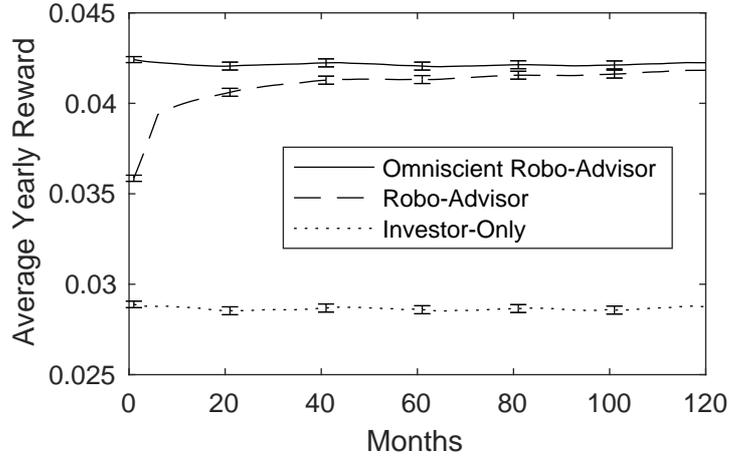}
		\caption{Yearly reward versus time, averaged over $10^4$ trials and shown with 95\% confidence. The {bound on the size} of mistakes is~$r=3$, the sample complexity is~$C=5$, the opportunity cost is~$\kappa=0.08\%$, and the total learning period is $10$ years (i.e., 120 months). The yearly performance of the {robo-advisor} model becomes close to {the} maximum achievable rewards {obtained by an omniscient robo-advisor} within a period of two years. The value added by the robo-advisor is evident from the noticeable large gap in its achieved yearly rewards relative to those of the {investor}-only model.}  \label{fig:learning_speed}
	\end{center}
\end{figure}

Throughout the section, we set the sample complexity function to be equal across states, 
and we denote it simply by~$C$. Figure~\ref{fig:learning_speed} shows the average yearly reward versus time for the omniscient {robo-advisor}, {robo-advisor}, and {investor-only} models. It shows that the yearly rewards achieved by the {robo-advisor} model become very close to those attainable by an omniscient {robo-advisor} within a period of two years. Therefore, as suggested by our theoretical results, the proposed algorithm converges fast to the optimal policy. It is worth noting that the sample complexity parameter is set to $C=5$,
hence much lower than what is required by Theorem~\ref{thm:complexity}. This is explained by the fact that our {theoretical result} considers worst-case bounds, and hence the sample complexity of the exploration bounds, such as those in Lemmas~\ref{lem:complexity_func} and~\ref{lem:complexity_func_R}, may be overly conservative in practice.

\begin{figure}
	\begin{center}
		\includegraphics[scale=1]{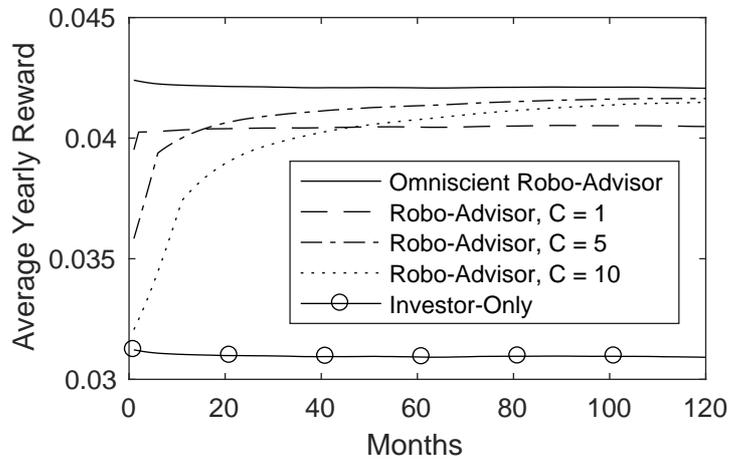}
		\caption{Yearly reward versus time, averaged over $10^4$ trials. The bound on the size of mistakes is~$r=3$, the opportunity cost is~$\kappa=0.08\%$, and the total learning period is $10$ years (i.e., 120 months). The trade-off between exploration and exploitation is captured by the sample complexity parameter~$C$. Increasing the sample complexity~$C$ reduces the short-term rewards but provides longer-term learning benefits.
}  \label{fig:complexity_h}
	\end{center}
\end{figure}

In Figure~\ref{fig:complexity_h}, we assess the performance of the robo-advisor algorithm using different sample complexity parameters. The choice of sample complexity governs the trade-off between shorter-term reduced payoff and longer-term learning benefits. When the sample complexity parameter is low, more exploitation is performed at the expenses of learning, which increases short-term rewards but prevents the robo-advisor from accurately learning the true risk preference of the investor. As the sample complexity parameter increases, short-term rewards are sacrificed to improve learning, providing a better performance in the longer term.

\begin{figure}
	\begin{center}
		\includegraphics[scale=1]{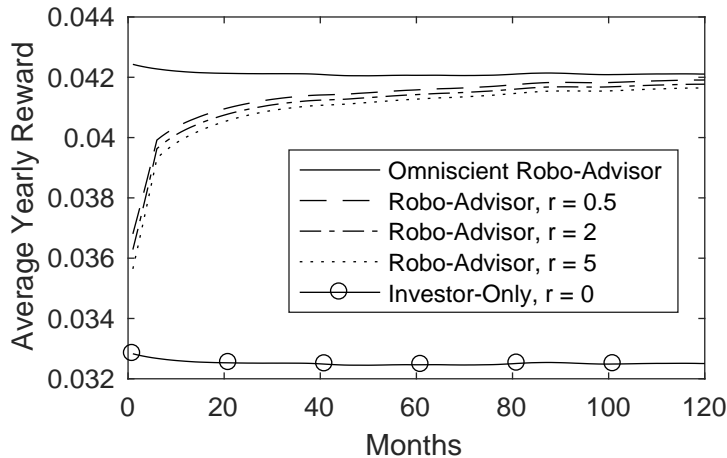}
		\caption{Yearly reward versus time, averaged over $10^4$ trials. The sample complexity is~${C=5}$, the opportunity cost is~$\kappa=0.08\%$, and the total learning period is $10$ years (i.e., 120 months). The performance of the robo-advisor improves as the investor provides more consistent investment decisions (i.e., {the size of mistakes is low}). Even if the investor makes {large} mistakes in the robo-advisor model, the yearly performance exceeds that of the {investor}-only model with no mistakes.
		} \label{fig:mistakes_h}
	\end{center}
\end{figure}

The speed of learning is also impacted by the size of mistakes~$r$, as defined in~\eqref{eq:mistakes}. Consistently with Lemma~\ref{lem:complexity_func_R}, Figure~\ref{fig:mistakes_h} highlights that mistakes slow down the learning process. However, even if the size of the investor's mistakes is large, the robo-advisor model still outperforms the investor-only model \emph{even} {in the hypothetical case where the investor makes no mistakes when she acts independently.} This is because the investor does not incur the opportunity cost~$\kappa$ if the portfolio is chosen by the robo-advisor.
By contrast, this cost is incurred every period in the investor-only system. From an operational perspective, this is one of the primary advantages of robo-advising, in that it allows the investor to delegate research on investment instruments, times for portfolio re-balancing, and other time-consuming activities to the robo-advisor.

\begin{figure}
	\begin{center}
		\includegraphics[scale=1]{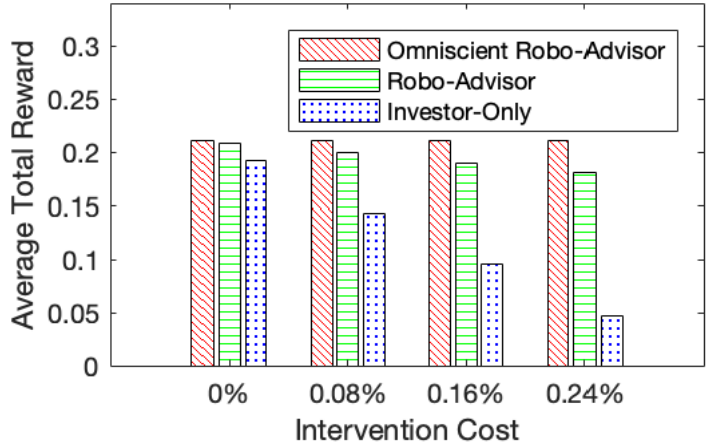}
		\caption{Expected total reward of the omniscient {robo-advisor} model (red), the {robo-advisor} model (green), and the investor-only model (blue), as a function of the opportunity cost~$\kappa$. The total reward is over a period of~$5$ years, averaged over~$10^4$ trials. The bound on the size of mistakes is~$r=3$ and the sample complexity is~$C=5$. The higher is the opportunity cost~$\kappa$, {the higher the value of the robo-advisor relative the investor.}
		}\label{fig:value_added_color_pattern2}
	\end{center}
\end{figure}

We observe from Figure~\ref{fig:value_added_color_pattern2}
that the robo-advisor model yields a higher average cumulative reward over the investment horizon, when compared to the investor-only model. This difference is partly explained by the aforementioned delegation process, which reduces the investor's costs due to the automation of investment decisions. A second contributing factor is that the investor makes mistakes when performing investment decisions. The robo-advisor, instead,
makes portfolio choices that are optimal given her current view of the investor, even if it is limited by the amount of available information on the investor. 
{The left-most bar charts of the figure highlight precisely this effect: the robo-advisor with imperfect knowledge of the investor's risk parameter still outperforms a stand-alone investor who incurs zero costs for making investment decisions herself but makes mistakes and is only correct on average.}

\section{Concluding Remarks and Future Developments} \label{sec:conclusion}

{Robo-advising can substantially enhance human efficiency in investment decisions by handling time-intensive operations. It is crucial, however, that the investor is able to efficiently communicate her preferences to the machine to optimize her objective function. The robo-advisor can only provide a useful service if its valuation of the benefits and risks associated with each action are aligned with the preferences of the investor that it serves.}

We have provided a framework that allows the investor to directly make portfolio choices, and proposed an algorithm through which a robo-advisor can quantify and assess the investor's subjective risk tolerance from her portfolio choices. The provided algorithm allows the robo-advisor to be~$\epsilon$-close to the (intractable) optimal policy with high probability after a polynomial number of steps. Our analysis leverages upon existing results from the PAC-MDP literature, and exploits the structure of the robo-advising problem to obtain sharper bounds than those attained in standard PAC-MDP problems. We have shown that the speed of learning depends on two key properties: consistency of investor's decisions and the required {precision of} the risk-aversion estimate. {We have analyzed the efficiency of the algorithm and the value added by the robo-advisor over the stand-alone investor.} 

In this paper, we have considered a mean-risk approach for portfolio selection where the investor chooses from a pre-specified set of portfolios that lie along the efficient frontier. {An extension of the model would allow the investor to personalize these portfolios based on her personal views. For example, one can employ the Black-Litterman model (\cite{black1990asset}) {to embed the investor's subjective views into the mean-variance Markowitz optimization framework.} 

Our analysis provides a first step towards quantifying the efficiency of learning in the context of risk-preferences estimation.
{The objective function maximized by the robo-advisor favors the construction of algorithms that try to achieve optimality as quickly as possible, with little regard to the costs incurred during the learning period.} This is acceptable in a setting with long investment horizon, such as planning for retirement. If the investment horizon is short, it may instead be desirable to design a less aggressive learning algorithm, i.e., that takes longer to achieve optimality but gains higher total rewards during the learning process.
A suitable performance measure would be the expected decrease in rewards when following the algorithm versus behaving optimally since the beginning.\footnote{This measure is known as \emph{regret}, and was first introduced by \cite{berry1985bandit}.}
We leave such an extension for future research.

{We believe that our framework can be specialized to deal with a larger class of autonomous systems, including those for logistic operations, digital assistance, defense, robotics, and self-driving taxi systems. As an illustrative example, consider the following practical application of our framework. Assume that a network of self-driving taxis tracks each time a user hails a ride. As a part of the service, the rider is able to select one of the several routes or destinations that match a search query; for example, to local restaurants or retail stores. 
	Our framework can generate an assessment of each user's preferences towards various destinations and on the rider's sensitivity towards the risks involved in travel, such as the uncertainty of the arrival time to each destination.
	This assessment enables the taxicab to provide better service to the rider, by presenting options that are customized for the user on the next ride.}

\section*{Funding}
Agostino Capponi's work was supported by the 2019 JPMorgan Chase Faculty Research Award.

\bibliographystyle{apalike}
\bibliography{Bibliography_short}

\begin{thebibliography}{}

\bibitem[Abbeel and Ng, 2004]{abbeel2004apprenticeship}
Abbeel, P. and Ng, A.~Y. (2004).
\newblock Apprenticeship learning via inverse reinforcement learning.
\newblock In {\em Proceedings of the Twenty-first International Conference on
  Machine Learning}, page~1. ACM.

\bibitem[Adler and Kritzman, 2007]{adler2007mean}
Adler, T. and Kritzman, M. (2007).
\newblock Mean--variance versus full-scale optimisation: In and out of sample.
\newblock {\em Journal of Asset Management}, 7(5):302--311.

\bibitem[Arora and Doshi, 2018]{arora2018survey}
Arora, S. and Doshi, P. (2018).
\newblock A survey of inverse reinforcement learning: Challenges, methods and
  progress.
\newblock Working paper, University of Georgia.
\newblock ar{X}iv preprint:
  \href{https://arxiv.org/abs/1806.06877}{https://arxiv.org/abs/1806.06877}.

\bibitem[Barsky et~al., 1997]{barsky1997preference}
Barsky, R.~B., Juster, F.~T., Kimball, M.~S., and Shapiro, M.~D. (1997).
\newblock Preference parameters and behavioral heterogeneity: An experimental
  approach in the health and retirement study.
\newblock {\em The Quarterly Journal of Economics}, 112(2):537--579.

\bibitem[Berry and Fristedt, 1985]{berry1985bandit}
Berry, D.~A. and Fristedt, B. (1985).
\newblock {\em Bandit problems: Sequential Allocation of Experiments}.
\newblock Monographs on Statistics and Applied Probability. Springer
  Netherlands.

\bibitem[Bjerknes and Vukovic, 2017]{Bjernes}
Bjerknes, L. and Vukovic, A. (2017).
\newblock Automated advice: A portfolio management perspective on
  robo-advisors.
\newblock Master's thesis, Norwegian University of Science and Technology.

\bibitem[Black and Litterman, 1990]{black1990asset}
Black, F. and Litterman, R. (1990).
\newblock Asset allocation: combining investor views with market equilibrium.
\newblock Technical report, Discussion paper, Goldman, Sachs \& Co.

\bibitem[Bucciol and Miniaci, 2011]{bucciol2011household}
Bucciol, A. and Miniaci, R. (2011).
\newblock Household portfolios and implicit risk preference.
\newblock {\em Review of Economics and Statistics}, 93(4):1235--1250.

\bibitem[Bucciol and Miniaci, 2018]{bucciol2018financial}
Bucciol, A. and Miniaci, R. (2018).
\newblock Financial risk propensity, business cycles and perceived risk
  exposure.
\newblock {\em Oxford Bulletin of Economics and Statistics}, 80(1):160--183.

\bibitem[Constantinides, 1984]{Constantinides}
Constantinides, G.~M. (1984).
\newblock Optimal stock trading with personal taxes: Implications for prices
  and the abnormal january returns.
\newblock {\em Journal of Financial Economics}, 13(1):65--89.

\bibitem[Das et~al., 2018]{Das1}
Das, S., Ostrov, D., Radhakrishnan, A., and Srivastav, D. (2018).
\newblock A new approach to goals-based wealth management.
\newblock {\em Journal of Investment Management}, 16(3):1---27.

\bibitem[D’Acunto et~al., 2019]{d2019promises}
D’Acunto, F., Prabhala, N., and Rossi, A.~G. (2019).
\newblock The promises and pitfalls of robo-advising.
\newblock {\em The Review of Financial Studies}, 32(5):1983--2020.

\bibitem[Foerster et~al., 2017]{Linnain}
Foerster, S., Linnainmaa, J.~T., Melzer, B.~T., and Previtero, A. (2017).
\newblock Retail financial advice: Does one size fit all?
\newblock {\em The Journal of Finance}, 72(4):1441--1482.

\bibitem[Gittins et~al., 2011]{gittins2011multi}
Gittins, J.~C., Glazebrook, K.~D., and Weber, R. (2011).
\newblock {\em Multi-armed bandit allocation indices}.
\newblock John Wiley \& Sons, second edition.

\bibitem[Holeman, 2018]{Holeman2018}
Holeman, N. (2018).
\newblock Take on more control with flexible portfolios.
\newblock
  \href{https://www.betterment.com/resources/flexible-portfolio-customization/}{https://www.betterment.com/}
  [Online; posted 28-Mar-2018, accessed 30-June-2019].

\bibitem[Holt and Laury, 2002]{holt2002risk}
Holt, C.~A. and Laury, S.~K. (2002).
\newblock Risk aversion and incentive effects.
\newblock {\em American Economic Review}, 92(5):1644--1655.

\bibitem[Kakade, 2003]{kakade2003sample}
Kakade, S.~M. (2003).
\newblock {\em On the sample complexity of reinforcement learning}.
\newblock PhD thesis, University of London.

\bibitem[Kearns and Singh, 2002]{kearns2002near}
Kearns, M. and Singh, S. (2002).
\newblock Near-optimal reinforcement learning in polynomial time.
\newblock {\em Machine learning}, 49(2-3):209--232.

\bibitem[Keyes, 2019]{Keyes2019}
Keyes, G. (2019).
\newblock How {B}etterment stayed on top in 2018 (and how they plan to stay
  there in 2019).
\newblock
  \href{http://financialadvisoriq.com/c/2164053/260323/focus_financial_subsidiaries_acquire_firms}{http://financialadvisoriq.com/}
  [Online; posted 2-Jan-2019, accessed 03-July-2019].

\bibitem[Kolter and Ng, 2009]{kolter2009near}
Kolter, J.~Z. and Ng, A.~Y. (2009).
\newblock Near-bayesian exploration in polynomial time.
\newblock In {\em Proceedings of the 26th Annual International Conference on
  Machine Learning}, pages 513--520. ACM.

\bibitem[Lam, 2016]{lam2016robo}
Lam, J.~W. (2016).
\newblock Robo-advisors: A portfolio management perspective.
\newblock Senior thesis, Yale College.

\bibitem[Markowitz, 1952]{Markowitz}
Markowitz, H. (1952).
\newblock Portfolio selection.
\newblock {\em Journal of Finance}, 7(1):77--91.

\bibitem[MAS, 2017]{MAS}
MAS (2017).
\newblock Consultation paper on provision of digital advisory services.
\newblock Monetary Authority of Singapore. Preprint available at
  \href{http://www.mas.gov.sg/~/media
  /MAS/News%20and%20Publications/Consultation%20Papers/Consultation%20Paper%20on%20Provision%20of%20Digital%20Advisory%20Services.pdf}{http://www.mas.gov.sg/}.

\bibitem[Ng and Russell, 2000]{ng2000algorithms}
Ng, A.~Y. and Russell, S.~J. (2000).
\newblock Algorithms for inverse reinforcement learning.
\newblock In {\em Proceedings of the 17th International Conference on Machine
  Learning}, pages 663--670. ACM.

\bibitem[Rachleff, 2014]{Rachleff2014}
Rachleff, A. (2014).
\newblock The right and wrong reasons to change your risk tolerance.
\newblock
  \href{https://blog.wealthfront.com/right-and-wrong-reasons-to-change-risk-tolerance/}{https://blog.wealthfront.com/}
  [Online; posted 05-Dec-2014, accessed 25-May-2019].

\bibitem[Regan, 2015]{Regan}
Regan, M. (2015).
\newblock Robo-advisors to run \$2 trillion by 2020 if this model is right.
\newblock
  \href{https://www.bloomberg.com/news/articles/2015-06-18/robo-advisers-to-run-2-trillion-by-2020-if-this-model-is-right}{https://www.bloomberg.com/}
  [Online; posted 18-June-2015, accessed 7-July-2019].

\bibitem[Russell, 1998]{russell1998learning}
Russell, S.~J. (1998).
\newblock Learning agents for uncertain environments.
\newblock In {\em Proceedings of the 11th Annual Conference on Computational
  Learning Theory}, pages 101--103. ACM.

\bibitem[Sani et~al., 2012]{sani2012risk}
Sani, A., Lazaric, A., and Munos, R. (2012).
\newblock Risk-aversion in multi-armed bandits.
\newblock In {\em Proceedings of the Advances in Neural Information Processing
  Systems 25}, pages 3275--3283. Curran Associates, Inc.

\bibitem[Schildberg-H{\"o}risch, 2018]{schildberg2018risk}
Schildberg-H{\"o}risch, H. (2018).
\newblock Are risk preferences stable?
\newblock {\em Journal of Economic Perspectives}, 32(2):135--54.

\bibitem[Shapiro et~al., 2009]{shapiro2009lectures}
Shapiro, A., Dentcheva, D., and Ruszczyński, A. (2009).
\newblock {\em Lectures on Stochastic Programming}.
\newblock Society for Industrial and Applied Mathematics.

\bibitem[Stein and Narasimhan, 1999]{Stein}
Stein, D.~M. and Narasimhan, P. (1999).
\newblock Of passive and active equity portfolios in the presence of taxes.
\newblock {\em Journal of Wealth Management}, 2(2):55--63.

\bibitem[Strehl et~al., 2009]{strehl2009reinforcement}
Strehl, A.~L., Li, L., and Littman, M.~L. (2009).
\newblock Reinforcement learning in finite {MDP}s: {PAC} analysis.
\newblock {\em Journal of Machine Learning Research}, 10:2413--2444.

\bibitem[Swift, 2015]{Swift2015}
Swift, S. (2015).
\newblock Betterment customers stay the course, stay clear of behavior gap.
\newblock
  \href{https://www.betterment.com/resources/betterment-customers-stay-the-course-steer-clear-of-behavior-gap/}{https://www.betterment.com/}
  [Online; posted 14-Apr-2015, accessed 25-May-2019].

\bibitem[Traff, 2016]{Traff}
Traff, J.~D. (2016).
\newblock The future of the wealth management industry: Evolution or
  revolution?
\newblock Master's thesis, Massachusetts Institute of Technology.

\bibitem[Vakili and Zhao, 2016]{vakili2016risk}
Vakili, S. and Zhao, Q. (2016).
\newblock Risk-averse multi-armed bandit problems under mean-variance measure.
\newblock {\em IEEE Journal of Selected Topics in Signal Processing},
  10(6):1093--1111.

\bibitem[Yook and Everett, 2003]{yook2003assessing}
Yook, K.~C. and Everett, R. (2003).
\newblock Assessing risk tolerance: Questioning the questionnaire method.
\newblock {\em Journal of Financial Planning}, 16(8):48.

\end{thebibliography}

\newpage
\appendix
\section*{Appendix} \label{sec:appendix}

\section{Proofs}\label{sec:proof}

\begin{proof}[Proof of Lemma~\ref{lemma:escape_bound}]
	For {a} fixed partial path~$h_{L}=(s_1, a_1^{\bM}, a_1^{\bH}, \dots, a^{\bH}_{{j}-1}, a^{\bM}_{{j}-1}, s_{j})$ {with $t<j\leq t+\tau-1$}, let $\Prop(h_{t:j})$ be the probability that the state trajectory~$(s_{t+1}, \dots, s_{j})$ is realized {after time $t$}. Let~${\mathcal{H}_j}$ be the set of all partial paths~$h_{j}$ such that all states~$(s_i)_{{t} < i \leq {j}}$ in that path belong to $K_{t}$ (i.e., {only states with an accurately estimated risk preference are reached}).
	Let~$r^{\pi}_{\theta}({j})$ be the (random) reward received by a type~$\theta$ investor at time~${j}$ given that the robo-advisor policy is~$\pi$, and~$r^{\pi}_{\theta}(h_{j},{j})$ the reward at time~${j}$ given that~$h_{j}$ is the realized partial path. We denote by~$\pi^{\ast}$ the optimal policy and by~$\mathcal{P}_{j}$ the policy followed by Algorithm~\ref{alg:} {at time~$j$}. Then, the following holds:
	\begin{align*}
	\mE \{r^{\pi^{\ast}}_{\theta}({j}) \} - \mE \{r^{\mathcal{P}_{j}}_{\theta}({j}) \}  &= \sum_{{h_j} \in {\mathcal{H}_j}} \Prop(h_{t:j})\left(r^{\pi^{\ast}}_{\theta}(h_{j},{j}) - r^{\mathcal{P}_{j}}_{\theta}(h_{j},{j})\right)  + \\
	& \qquad \qquad \sum_{h_{j} \not\in {\mathcal{H}_j}}   \Prop(h_{t:j})\left(r^{\pi^{\ast}}_{\theta}(h_{j},{j}) - r^{\mathcal{P}_{j}}_{\theta}(h_{j},{j}) \right) \\
	& = \sum_{h_{j} \not\in {\mathcal{H}_j}} \Prop(h_{t:j})\left(r^{\pi^{\ast}}_{\theta}(h_{j},{j}) -  r^{\mathcal{P}_{j}}_{\theta}(h_{j},{j})\right)  \\
	&\leq 2r_{max} \sum_{h_{j} \not\in {\mathcal{H}_j}} \Prop(h_{t:j}) \\
	&\leq 2r_{max} \Prop(A(K_{t},s_{t}))
	\end{align*}
	with a probability at least~$1-\delta$.
	The first step in the above derivation separates the possible paths in which only states with an accurately estimated risk preference are reached from those in which the robo-advisor encounters a state with an  investor risk preference {that is not accurately estimated}. {The first sum equals zero} because~$\mathcal{P}_{j}$ acts optimally on states with {an accurately estimated} risk preference with a probability greater than~$1-|K_{t}|\frac{\delta}{|S|}$ by the union bound. Note that this probability is at least~$1-\delta$ since~$|S|\geq |K_{t}|$. The final inequality makes use of the facts that rewards are bounded~${|r_\theta|\leq r_{max}}$ and the sum of probabilities that a partial path includes~$s \not\in K$ is bounded above by~$\Prop(A(K_{t},s_{t}))$. The result then follows since $V^{{\ast}}_{\tau}(s_{t},\theta) - V^{\mathcal{P}_t}_{\tau}({h_t},\theta)= \sum_{{j=t}}^{t+\tau-1}  \mE \{r^{\pi^{\ast}}_{\theta}({j}) \} - \mE \{r^{\mathcal{P}_{j}}_{\theta}({j}) \}$.

\end{proof}

Before proceeding to the proof of Theorem~\ref{thm:complexity}, we {recall} the following well-known result{, also known as} the Chernoff-Hoeffding Bound.
\begin{lem}\label{lemma:Hoeffding_Bound}
	Suppose a weighted coin, when flipped, has a probability~$p>0$ of landing with heads up. Then, for any positive integer~$k$ and real number~$\delta$, there exists~$m=O\left(\frac{k}{p}\ln\frac{1}{\delta}\right)$, such that after~$m$ tosses, with probability at least~$1-\delta$ we will observe~$k$ or more heads.
\end{lem}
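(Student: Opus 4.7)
The plan is to prove this via an elementary block-and-union-bound argument, which is self-contained and avoids invoking any Chernoff-type concentration inequality. The idea is to partition the $m$ tosses into $k$ disjoint blocks of equal size $N$, and to show that with high probability \emph{every} block contains at least one heads; summing contributions across blocks then yields at least $k$ heads overall.

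Concretely, I would set $N := \lceil (1/p)\ln(k/\delta)\rceil$ and take $m = kN$. Let $E_i$ denote the event that no heads occurs in block~$i$. Because tosses are independent and land heads with probability~$p$,
$$ \Prop(E_i) \;=\; (1-p)^N \;\leq\; e^{-pN} \;\leq\; \delta/k, $$
using the inequality $1-x \leq e^{-x}$ for the first step and the choice of $N$ for the second. A union bound then gives $\Prop\bigl(\bigcup_{i=1}^{k} E_i\bigr) \leq k \cdot (\delta/k) = \delta$, so on the complementary event $\bigcap_{i=1}^{k} E_i^c$ every block contributes at least one heads and the total number of heads is at least $k$. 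This event has probability at least $1-\delta$, which is precisely the conclusion of the lemma.

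It remains to verify the asymptotic form $m = O\bigl((k/p)\ln(1/\delta)\bigr)$. Since $N \leq (1/p)\ln(k/\delta) + 1$, we have $m = kN \leq (k/p)\ln(k/\delta) + k = (k/p)\bigl[\ln k + \ln(1/\delta)\bigr] + k$. In the regime relevant to the PAC-MDP analysis that invokes this lemma (Theorem~\ref{thm:complexity}), where $\delta$ is a small failure probability, the $\ln k$ term is absorbed into the leading $\ln(1/\delta)$ factor and the additive $k$ is subsumed by the leading $(k/p)\ln(1/\delta)$ term, yielding the stated $O$-bound. I expect no genuine technical obstacle: the core argument is a textbook union bound, and the only mild bookkeeping concerns the simplification of $\ln(k/\delta)$ into $O(\ln(1/\delta))$, which is standard. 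An alternative route would apply the multiplicative Chernoff bound to $\mathrm{Bin}(m,p)$ with $\epsilon = 1/2$ and $m$ chosen so that the mean exceeds $2k + 2\ln(1/\delta)$, but the block argument is more elementary and equally sharp up to constants.
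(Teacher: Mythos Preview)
The paper does not actually prove this lemma: it is stated without proof and introduced as ``the following well-known result, also known as the Chernoff-Hoeffding Bound.'' So there is no approach in the paper to compare against.

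Your block-and-union-bound argument is correct and pleasantly elementary. The only point worth flagging is the one you already identify: your construction yields $m = O\bigl((k/p)\ln(k/\delta)\bigr)$, and collapsing this to $O\bigl((k/p)\ln(1/\delta)\bigr)$ requires treating $\ln k$ as $O(\ln(1/\delta))$. That is harmless for the application in Theorem~\ref{thm:complexity} (where $k \leq \sum_s C(s)$ already appears inside the bound and $\delta$ is a small failure parameter), but it is not literally what the lemma asserts for arbitrary $k$ and $\delta$. A direct multiplicative Chernoff bound---which is presumably what the paper has in mind, given the name it attaches to the result---removes the extra $\ln k$: taking $m$ with $mp = c\,k\ln(1/\delta)$ for a suitable constant $c$ and applying $\Prop\bigl(X \leq (1-\epsilon)\mu\bigr) \leq e^{-\epsilon^2\mu/2}$ with $\epsilon \geq 1/2$ gives $\Prop(X < k) \leq \delta$ outright. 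Either route suffices for the use made of the lemma here.
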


We are now ready to prove Theorem~\ref{thm:complexity}.

\begin{proof}[Proof of Theorem~\ref{thm:complexity}]
	{Let $K_t$ be the set of states with accurately estimated risk preferences at time~$t$.}
	First, suppose that~$\Prop(A(K_{t},s_{t}))\leq \frac{\epsilon}{2\tau r_{max}}$. By Lemma~\ref{lemma:escape_bound}, with probability at least~$1-\delta$, we have
	$$V^{\mathcal{P}_t}_{\tau}({h_t},\theta) \geq V^{\ast}_{\tau}(s_t,\theta) - 2\tau r_{max}\Prop(A(K_{t},s_{t}))\geq V^{\ast}_{\tau}(s_t,\theta) - \epsilon.$$
	Now suppose that~$\Prop(A(K_{t},s_{t}))>\frac{\epsilon}{2\tau r_{max}}$, which implies that the robo-advisor following~$\mathcal{P}_t$ will either follow the optimal policy for the remaining~$\tau$ time steps, or encounter~$s \not\in K_t$ with probability at least~$\frac{\epsilon}{2\tau r_{max}}$. We call the event that the robo-advisor encounters ~$s \not\in K_t$ a ``success''. Then, by Lemma~\ref{lemma:Hoeffding_Bound}, after~$O((\tau^2 \zeta r_{max} /\epsilon)\ln(1/\delta))$ time steps, where~$\Prop(A(K_{t},s_{t}))>\frac{\epsilon}{2\tau r_{max}}$, $\zeta$ successes will occur with a probability at least~$1-\delta$. {In the application of} Lemma~\ref{lemma:Hoeffding_Bound}, the event of a coin landing heads corresponds to the success event after following the robo-advisor's policy for~$\tau$ steps. However, the maximum number of successes that will occur throughout the execution of the algorithm is bounded by $\sum_s C(s)$ and hence $\zeta \leq \sum_s C(s) $. Therefore, by the union bound, with a probability at least~$1-2\delta$ the robo-advisor will execute an~$\epsilon$-optimal policy on all but~$O\left(\frac{\tau^2  r_{max} \sum_s C(s)}{\epsilon}\ln \frac{1}{\delta}\right)$ time steps.
	
	
\end{proof}

\begin{proof}[Proof of Lemma~\ref{lem:complexity_func}]
	Fix a state~$s \in S$. Denote by $\hat{\theta}^{(n)}_s$ the risk-aversion estimator after observing the investor's action~$n$ times in state~$s$. Let~$N=1+\frac{4\sigma_s^2}{\delta \xi^2}$. We have
	\begin{align*}
	\Prop(|\theta_s - \hat{\theta}^{(N)}_s |<\xi/2) &\geq 1-\frac{4\sigma_s^2}{N\xi^2} \\
	&\geq 1-\delta.
	\end{align*}
	where the first line follows by Chebyshev's inequality and the second by the fact $N=1+\frac{4\sigma_s^2}{\delta \xi^2}>\frac{4\sigma_s^2}{\delta \xi^2}$.
	In other words, with a probability of at least~$1-\delta$, we have $\theta_s = \argmin_{x \in \Theta} |x - \hat{\theta}^{(N)}_s  | $.

\end{proof}

\begin{proof}[Proof of Lemma~\ref{lem:complexity_func_R}]
	Fix a state~$s \in S$. Denote by $\hat{\theta}^{(n)}_s$ the risk-aversion estimator after observing the investor's action~$n$ times in state~$s$. Let~$N=\frac{2R_s}{\xi^2}\ln \frac{2}{\delta}$. We have
	\begin{align*}
	\Prop(|\theta_s - \hat{\theta}^{(N)}_s |<\xi/2) &> 1-2\exp(-\frac{N\xi^2}{2R_s^2}) \\
	&= 1-\delta.
	\end{align*}
	where the first line follows from Hoeffding's inequality and the second by choosing~$N=\frac{2R_s}{\xi^2}\ln \frac{2}{\delta}$.
	In other words, with a probability of at least~$1-\delta$, we have $\theta_s = \argmin_{x \in \Theta} |x - \hat{\theta}^{(N)}_s  |$.
\end{proof}

\section{Incremental formula}\label{app:incremental}
In this section, we present the detailed derivation of the incremental formula given in Section~\ref{sec:alg}.
\begin{align*}
\hat{\theta}^{(n)}_s	&=\frac{1}{n}\sum_{i=1}^{n} g^{-1}(a_{i,s}^{\bH};s) \\
&=\frac{1}{n}\left(g^{-1}(a_{n,s}^{\bH};s)+\sum_{i=1}^{n-1}g^{-1}(a_{i,s}^{\mathbf{H}};s) \right) \\
&=\frac{1}{n}\left(g^{-1}(a_{n,s}^{\bH};s) + (n-1)  \frac{1}{n-1}\sum_{i=1}^{n-1}g^{-1}(a_{i,s}^{\bH};s)\right) \\
&=\frac{1}{n}\left(g^{-1}(a_{n,s}^{\bH};s) + (n-1) \hat{\theta}_s^{(n-1)}\right)	\\
&=\frac{1}{n}\left(g^{-1}(a_{n,s}^{\bH};s) + n\hat{\theta}_s^{(n-1)}-\hat{\theta}_s^{(n-1)}\right) \\
&=\hat{\theta}_s^{(n-1)}+\frac{1}{n}\left(g^{-1}(a_{n,s}^{\bH};s) -\hat{\theta}_s^{(n-1)}\right)
\end{align*}

\section{The Infinite-Horizon Case}\label{app:infinite_horizon}
In this section, we extend our results to the case where $T=\infty$ and introduce a discount factor~$\gamma<1$. Let~$r^{\pi}_{\theta}(t)$ be the (random) reward received by a type~$\theta$ investor at time~$t$ given that the robo-advisor policy is~$\pi$. Let $V^{\pi}(s_{t},\theta):=\Ex\left(\sum_{{j=1}}^\infty \gamma^{{j}-1} r^{\pi}_{\theta}({j}) \right)$ denote the discounted, infinite-horizon value function corresponding to a policy~$\pi$, starting from state~$s_{t}$ and given that the risk {aversion} parameter of the investor is~$\theta$. If~$T$ is a positive integer, let $V^{\pi}_T(s_{t},\theta):=\Ex\left(\sum_{{j}=1}^T \gamma^{{j}-1} r^{\pi}_{\theta}({j}) \right)$ denote the $T$-step value corresponding to policy~$\pi$, when the market starts from state~$s_{t}$ and the risk aversion parameter of the investor is~$\theta$. {If $\pi$ is non-stationary, then $s_{t}$ is replaced by the partial path~$h_t$ in the previous definitions.} We then have the following extension of Lemma~\ref{lemma:escape_bound}.

\begin{lem}\label{lemma:escape_bound2}
	{Let~$\mathcal{P}_t$	denote the policy followed by Algorithm~\ref{alg:} at time~$t$, $s_t$ be the economic state at $t$,} and set $C(s)=f\left(s,\frac{\delta}{|S|}\right)$.
	Let $\theta$ be the true risk aversion parameter of the investor, $K_{t}$ the set of states with {accurately estimated} risk preferences, and~$A(K_{t},s_{t})$ be the event that a state $s \notin K_{t}$ is {visited} in the following~$T$ steps if the market starts in state~$s_{t} \in K$. Then, with probability at least~$1-\delta$,
	$$V^{\mathcal{P}_t}_{T}({h_t},\theta) \geq V^{\ast}_{T}(s_{t},\theta) - \frac{2r_{max}\Prop(A(K_{t},s_{t}))}{1-\gamma} $$
	where~$|r(\cdot)| \leq r_{max}$ and~$\mathcal{P}_t$ is the policy followed by Algorithm~\ref{alg:} at time~$t$.
\end{lem}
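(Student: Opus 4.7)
The plan is to mirror the proof of Lemma~\ref{lemma:escape_bound} essentially verbatim, the only substantive change being that the uniform time budget $\tau$ appearing in the undiscounted bound is replaced by the geometric sum $\sum_{j=0}^{T-1}\gamma^{j} \le 1/(1-\gamma)$ that arises once the rewards are discounted. So first I would write out
\[
V^{\ast}_{T}(s_t,\theta) - V^{\mathcal{P}_t}_{T}(h_t,\theta)
=\sum_{j=t}^{t+T-1}\gamma^{\,j-t}\Bigl(\mE\{r^{\pi^{\ast}}_{\theta}(j)\} - \mE\{r^{\mathcal{P}_{j}}_{\theta}(j)\}\Bigr),
\]
reducing the proof to bounding the per-step gap $\mE\{r^{\pi^{\ast}}_{\theta}(j)\}-\mE\{r^{\mathcal{P}_{j}}_{\theta}(j)\}$ uniformly in~$j$.

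Next, for each $j>t$, I would decompose the per-step gap exactly as in Lemma~\ref{lemma:escape_bound}: for a fixed partial path $h_j = (s_t,a_t^{\bM},a_t^{\bH},\ldots,s_j)$, let $\Prop(h_{t:j})$ be the probability of the state trajectory $(s_{t+1},\ldots,s_j)$, and let $\mathcal{H}_j$ denote the set of partial paths whose intermediate states all lie in $K_t$. Split $\sum_{h_j}\Prop(h_{t:j})(r^{\pi^{\ast}}_{\theta}(h_j,j) - r^{\mathcal{P}_j}_{\theta}(h_j,j))$ into sums over $\mathcal{H}_j$ and its complement. The same union-bound argument over the (at most) $|S|$ accurately-estimated states, each failing with probability at most $\delta/|S|$, shows that with probability at least $1-\delta$ the policy $\mathcal{P}_j$ agrees with $\pi^{\ast}$ on every path in $\mathcal{H}_j$, killing the first sum. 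The second sum is bounded by $2 r_{\max}\cdot \Prop(h_j\notin\mathcal{H}_j)\le 2r_{\max}\,\Prop(A(K_t,s_t))$, since any $h_j \notin\mathcal{H}_j$ contains a state outside $K_t$ and is therefore captured by the escape event.

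Finally I would assemble the discounted sum:
\[
V^{\ast}_{T}(s_t,\theta) - V^{\mathcal{P}_t}_{T}(h_t,\theta)
\le 2 r_{\max}\,\Prop(A(K_t,s_t))\sum_{j=0}^{T-1}\gamma^{j}
\le \frac{2 r_{\max}\,\Prop(A(K_t,s_t))}{1-\gamma},
\]
which yields the claimed inequality. I do not anticipate a real obstacle here: the only thing to be careful about is that the union bound must be taken once, uniformly over all time steps $j$ and all states in $K_t$ at which the optimality of $\mathcal{P}_j$ is invoked, so that the overall failure probability remains $\delta$; this works because the event ``$\mathcal{P}_j$ acts optimally at every state in $K_t$'' is determined by the estimation event on $K_t$ alone and does not need to be re-unioned at each $j$. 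Consequently the discount factor merely rescales the time horizon from $\tau$ to $1/(1-\gamma)$, and the rest of the argument transplants without modification.
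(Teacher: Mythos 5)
Your proposal is correct and follows essentially the same route as the paper: the paper's proof of this lemma simply repeats the argument of Lemma~\ref{lemma:escape_bound} with $\tau$ replaced by $T$ and the per-step gap $2r_{max}\Prop(A(K_t,s_t))$ weighted by discount factors, then bounds $\sum_{j=1}^{T}\gamma^{j-1}$ by $\frac{1}{1-\gamma}$, exactly as you do. Your extra remark about applying the union bound once, uniformly over the time steps, is a correct (and implicit in the paper) point of care, not a deviation.
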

\begin{proof}
	Replacing~$\tau$ with~$T$, and proceeding as in the proof of Lemma~\ref{lemma:escape_bound}, we get
	\begin{align*}
	V^{\mathcal{P}_t}_{T}({h_t},\theta) &\geq V^{\ast}_{T}(s_{t},\theta) - 2r_{max}\Prop(A(K_{t},s_{t}))\sum_{{j=1}}^T \gamma^{t-1} \\
	&\geq V^{\ast}_{T}(s_{t},\theta) - \frac{2 r_{max}\Prop(A(K_{t},s_{t}))}{1-\gamma}
	\end{align*}
	where the last inequality follows from the fact that~$\sum_{{j}=1}^T \gamma^{{j}-1}=\frac{1-\gamma^T}{1-\gamma}<\frac{1}{1-\gamma}$ for $\gamma<1$.
\end{proof}

Next, we prove the extension of Theorem~\ref{thm:complexity} to the infinite horizon discounted case.

\begin{thm}\label{thm:complexity2}
	Let~$\mathcal{P}_t$	denote the policy followed by Algorithm~\ref{alg:} at time~$t$, $s_t$ be the {economic} state at $t$, and {set} ~$C(s)=f\left(s,\frac{\delta}{|S|}\right)$.
	Let~$\theta$ be the true investor type. Then with probability at least~$1-2\delta$,
	$$V^{\mathcal{P}_t}({h_t},\theta) \geq V^{\ast}(s_t,\theta) - 3\epsilon $$
	--- i.e., the algorithm is $3\epsilon$-close to the true optimal policy --- for all but
	$$ m= O\left(\frac{ r_{max}\sum_s C(s)}{\epsilon(1-\gamma)^2}\ln\frac{1}{\delta}\ln \frac{r_{max}}{\epsilon(1-\gamma)}\right) $$
	time steps.

\end{thm}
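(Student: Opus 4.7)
The plan is to reduce the infinite-horizon discounted problem to an effectively finite-horizon one by truncating the value function at an $\epsilon$-horizon time, and then mimic the argument of Theorem~\ref{thm:complexity} with Lemma~\ref{lemma:escape_bound2} in place of Lemma~\ref{lemma:escape_bound}.

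First I would define the $\epsilon$-horizon time
\[
H := \left\lceil \frac{1}{1-\gamma}\ln\frac{r_{max}}{\epsilon(1-\gamma)} \right\rceil,
\]
chosen so that the tail of any discounted cumulative reward beyond $H$ steps is bounded by $\epsilon$. Concretely, using $|r|\le r_{max}$ and the geometric tail $\sum_{j=H+1}^{\infty}\gamma^{j-1} = \gamma^H/(1-\gamma)$, we get $|V^{\pi}(s,\theta)-V^{\pi}_H(s,\theta)|\le r_{max}\gamma^H/(1-\gamma)\le \epsilon$ for every policy $\pi$ and every starting state $s$ (and likewise when $\pi$ is non-stationary and conditioned on a history $h_t$). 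Thus we can compare the truncated values $V^{\mathcal{P}_t}_H(h_t,\theta)$ and $V^{\ast}_H(s_t,\theta)$, absorbing two $\epsilon$ errors from the truncation on each side.

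Next I would carry out a dichotomy on the escape probability, exactly as in the proof of Theorem~\ref{thm:complexity}. Setting $C(s)=f(s,\delta/|S|)$ and letting $K_t$ be the set of accurately estimated states at time $t$, consider whether $\Prop(A(K_t,s_t)) \le \epsilon(1-\gamma)/(2r_{max})$ where $A(K_t,s_t)$ is the event of escaping $K_t$ within $H$ steps. In the small-escape case, Lemma~\ref{lemma:escape_bound2} applied with $T=H$ yields $V^{\mathcal{P}_t}_H(h_t,\theta)\ge V^{\ast}_H(s_t,\theta)- 2 r_{max}\Prop(A(K_t,s_t))/(1-\gamma)\ge V^{\ast}_H(s_t,\theta)-\epsilon$ with probability at least $1-\delta$, and combining with the two truncation errors gives $V^{\mathcal{P}_t}(h_t,\theta)\ge V^{\ast}(s_t,\theta) - 3\epsilon$, as desired. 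In the large-escape case, over each window of $H$ consecutive steps the probability of a ``success'' (i.e., visiting a state outside $K_t$ and thus augmenting the investor's action counter) is at least $\epsilon(1-\gamma)/(2r_{max})$.

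Finally I would apply Lemma~\ref{lemma:Hoeffding_Bound} with $p=\epsilon(1-\gamma)/(2r_{max})$ and $k=\sum_s C(s)$, since once $\sum_s C(s)$ successes have occurred every state is accurately estimated and the large-escape case can no longer arise. This gives at most
\[
O\!\left(\frac{k}{p}\ln\frac{1}{\delta}\right) = O\!\left(\frac{r_{max}\sum_s C(s)}{\epsilon(1-\gamma)}\ln\frac{1}{\delta}\right)
\]
$H$-step windows where the algorithm can be more than $3\epsilon$ suboptimal, with failure probability at most $\delta$. Multiplying by the window length $H = O\bigl(\ln(r_{max}/(\epsilon(1-\gamma)))/(1-\gamma)\bigr)$ and taking a union bound with the failure probability from Lemma~\ref{lemma:escape_bound2} delivers the claimed complexity with overall probability at least $1-2\delta$.

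The main obstacles I expect are (i) choosing the truncation horizon $H$ so that the three sources of error (two geometric truncations plus the escape bound) each absorb at most $\epsilon$ and combine to $3\epsilon$, and (ii) keeping the counting argument rigorous: successes must be counted across possibly overlapping $H$-windows, but since each success advances the global counter $N(s)$ by one and the total cap $\sum_s C(s)$ is fixed, the Chernoff-Hoeffding bound still controls the number of $H$-windows in which the large-escape case can occur. Once those two accounting issues are handled, the argument is essentially the discounted analogue of the proof of Theorem~\ref{thm:complexity}.
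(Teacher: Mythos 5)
Your proposal is correct and follows essentially the same route as the paper: truncate at the $\epsilon$-horizon $T=\frac{1}{1-\gamma}\ln\frac{r_{max}}{\epsilon(1-\gamma)}$ (the paper cites Lemma~2 of \cite{kearns2002near} for the truncation bound, which is exactly your geometric-tail computation), split on whether $\Prop(A(K_t,s_t))$ exceeds $\frac{\epsilon(1-\gamma)}{2r_{max}}$, use Lemma~\ref{lemma:escape_bound2} in the small-escape case to accumulate the three $\epsilon$ errors, and apply Lemma~\ref{lemma:Hoeffding_Bound} with the success cap $\zeta\le\sum_s C(s)$ plus a union bound in the large-escape case. The accounting of $T$-step windows and the resulting bound match the paper's argument.
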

\begin{proof}
	Let $T=\frac{1}{1-\gamma}\ln \frac{r_{max}}{\epsilon(1-\gamma)}$. It follows from Lemma~2 of \cite{kearns2002near} that $|V^{\pi}_T(s_{t}, \theta)-V^{\pi}(s_{t}, \theta)|\leq \epsilon$ for any state $s_{t}$ and policy~$\pi$. {The previous inequality also holds for non-stationary policies.} {Let $K_t$ be the set of states with accurately estimated risk preferences at time~$t$.}
	First, suppose that~${\Prop(A(K_{t},s_{t}))\leq \frac{\epsilon(1-\gamma)}{2 r_{max}}}$. Then, with probability at least~$1-\delta$,
	\begin{align*}
	V^{\mathcal{P}_t}({h_t},\theta) &\geq V^{\mathcal{P}_t}_{T}({h_t},\theta) -\epsilon \\
	&\geq V^{\ast}_{T}(s_{t},\theta) - \frac{2r_{max}\Prop(A(K_{t},s_{t}))}{1-\gamma} -\epsilon \\
	& \geq V^{\ast}_{T}(s_{t},\theta) - 2\epsilon \\
	& \geq  V^{\ast}(s_{t},\theta) - 3\epsilon,
	\end{align*}
	where the first and last inequalities follow from the above definition of $T$, the second inequality by Lemma~\ref{lemma:escape_bound2}, and the third inequality by the assumption ${\Prop(A(K_{t},s_{t}))\leq \frac{\epsilon(1-\gamma)}{2 r_{max}}}$. 
	Next, suppose that~$\Prop(A(K_{t},s_{t}))>\frac{\epsilon(1-\gamma)}{2 r_{max}}$, which implies that the robo-advisor implementing~$\mathcal{P}_t$ will either follow the optimal policy for the next~$T$ time steps, or encounter a state~$s \not\in K_t$ with probability at least~$\frac{\epsilon(1-\gamma)}{2 r_{max}}$. We call the event that the robo-advisor encounters~$s \not\in K_t$ a ``success''. Then, by Lemma~\ref{lemma:Hoeffding_Bound}, after~$O(\frac{T \zeta r_{max}}{\epsilon(1-\gamma)} \ln(1/\delta))$ time steps, where~$\Prop(A(K_{t},s_{t}))>\frac{\epsilon(1-\gamma)}{2 r_{max}}$, $\zeta$ successes will occur with a probability at least~$1-\delta$. {In the application of} Lemma~\ref{lemma:Hoeffding_Bound}, the event of a coin landing heads corresponds to the success event after following the robo-advisor's policy for~$T$ steps. However, the maximum number of successes that will occur throughout the execution of the algorithm is bounded by $\sum_s C(s)$ and hence $\zeta \leq \sum_s C(s) $. Therefore, by the union bound, with a probability at least~$1-2\delta$ the robo-advisor will execute an~$3\epsilon$-optimal policy on all but~$O\left(\frac{Tr_{max} \sum_s C(s) }{\epsilon(1-\gamma)}\ln\frac{1}{\delta}\right)=O\left(\frac{r_{max}\sum_s C(s) }{\epsilon(1-\gamma)^2}\ln\frac{1}{\delta}\ln \frac{r_{max}}{\epsilon(1-\gamma)}\right)$ time steps.
\end{proof}

\section{Resolution Parameter}\label{app:granularity}
In this section, we study how different values of the grid resolution parameter~$\xi$ affect the rewards. We assume that the true investor's risk-aversion parameter is uniformly sampled from the interval~$[2.2,8.3]$, as {discussed in the model calibration} Section~\ref{sec:empirical_calibration}. 
Figure~\ref{fig:xi} shows, {consistently with intuition}, that as the resolution parameter decreases, the rewards improve. When the resolution parameter is small, the robo-advisor is able to estimate the true risk aversion parameter more accurately, and hence can make more customized portfolio choices. The figure shows that the benefits obtained from reducing the resolution parameter below~$\xi=0.1$ are insignificant.

\begin{figure}[H]
	\begin{center}
		\includegraphics[scale=1]{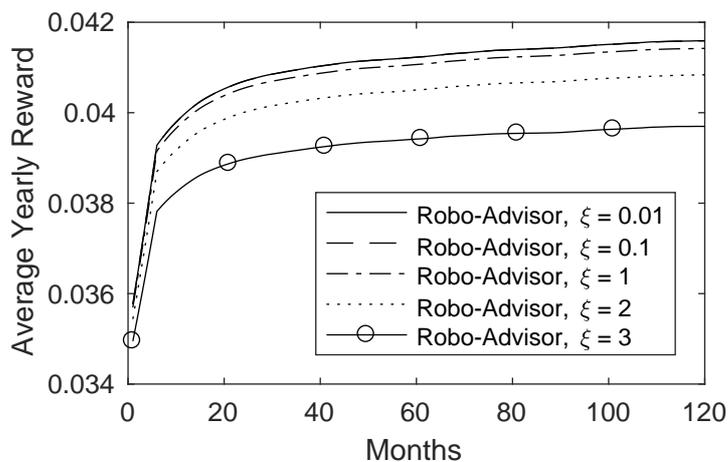}
		\caption{Yearly reward versus time, averaged over $10^4$ trials. The range of mistakes~${r=3}$, the sample complexity~$C=5$, the intervention cost~$\kappa=0.08\%$, and the total learning period is $10$ years (i.e., 120 months). Reducing the resolution parameter~$\xi$ improves the estimate of the risk aversion parameter~$\theta$ and hence enables the robo-advisor to make more tailored investment decisions, improving investor rewards. The value of reducing the resolution parameter below~$\xi=0.1$ is minuscule. }  \label{fig:xi}
	\end{center}
\end{figure}

\end{document}